\providecommand{\U}[1]{\protect\rule{.1in}{.1in}}
\providecommand{\U}[1]{\protect\rule{.1in}{.1in}}
\newtheorem{theorem}{Theorem}
\theoremstyle{plain}
\newtheorem{lemma}{Lemma}
\numberwithin{equation}{section}
\begin{document}
\title[Wave Packets and Coherent Structures]{Wave Packets and Coherent
Structures\\
for Nonlinear Schr\"{o}dinger equations \\
in Variable Nonuniform Media}
\author{Alex Mahalov}
\address{School of Mathematical and Statistical Sciences \& Mathematical,
Computational and Modeling Sciences Center, Arizona State University, Tempe,
AZ 85287--1804, U.S.A.}
\email{mahalov@asu.edu}
\author{Sergei K. Suslov}
\address{School of Mathematical and Statistical Sciences \& Mathematical,
Computational and Modeling Sciences Center, Arizona State University, Tempe,
AZ 85287--1904, U.S.A.}
\email{sks@asu.edu}
\urladdr{http://hahn.la.asu.edu/\symbol{126}suslov/index.html}
\date{\today }
\subjclass{Primary 81Q05, 35C05. Secondary 42A38}
\keywords{Accelerating Airy beams, nonlinear waves in plasma, time-dependent
nonlinear Schr\"{o}dinger equation, derivative nonlinear Schr\"{o}dinger
equation, Gross--Pitaevskii equation, generalized harmonic oscillators,
Panlev\'{e} transcendents, quintic complex Ginzburg--Landau equation,
nonlinear optics, rogue waves in plasma and ocean, Tonks--Girardeau gas of
impenetrable bosons, Heisenberg Uncertainty Principle, Wigner function.}

\begin{abstract}
We determine conditions under which a generic gauge invariant nonautonomous
and inhomogeneous nonlinear partial differential equation in the
two-dimensional space-time continuum can be transform into standard
autonomous forms. In addition to the nonlinear Schr\"{o}dinger equation,
important examples include the derivative nonlinear Schr\"{o}dinger
equation, the quintic complex Ginzburg--Landau equation, and the
Gerdjikov--Ivanov equation. This approach provides a mathematical
description of nonstationary media supporting unidimensional signal
propagation and/or total field trapping. In particular, we study
self-similar and nonspreading wave packets for Schr\"{o}dinger equations.
Some other important coherent structures are also analyzed and applications
to nonlinear waves in inhomogeneous media such as atmospheric plasma, fiber
optics, hydrodynamics, and Bose--Einstein condensation are discussed.
\end{abstract}

\maketitle

\section{Introduction}

The nonlinear Schr\"{o}dinger equation is a universal mathematical model for
many physical systems --- in a variety of nonlinear problems, the
one-dimensional nonlinear Schr\"{o}dinger equation with a cubic nonlinearity
describes the propagation of an envelope wave riding over a carrier when the
lowest order contribution to the dispersion relation is proportional to the
square of a local amplitude of the wave envelope. This equation is generic
arising in consideration of the lowest effects of dispersion and
nonlinearity on a wave packet, including nonlinear optics, deep water waves,
acoustics, plasma physics and nonlinear propagation of plane diffracted wave
beams in the focusing regions of the ionosphere \cite{Balakrish85}, \cite%
{BenneyNewell67}, \cite{Bouchbinder03}, \cite{ChenHH:LiuCS76}, \cite%
{ChenHH:LiuCS78}, \cite{GurevichNLIonosphere}, \cite{KadomtsevCollectPP}, 
\cite{KadKarp71}, \cite{KarpmanNW}, \cite{Kelley65}, \cite{Scott03}, \cite%
{ScottEncNS}, \cite{OstrPot99}, \cite{ZakhOstr09}, and \cite{Zakh:Shab71}.

The inverse scattering method is a standard approach to several completely
integrable nonlinear partial differential equations, including the
Korteweg-de Vries, nonlinear Schr\"{o}dinger or Gross--Pitaevskii,
Sine-Gordon and Kadomtsev--Petviashvili equations \cite{AblowClark91}, \cite%
{Ablo:Seg81}, \cite{Novikovetal}, \cite{PitString03Book}, \cite{Scott03},
and \cite{ScottEncNS}. At the same time, the physical situations in which
these equations arise tend to be highly idealized. The inclusion of effects
such as damping, external forces, an inhomogeneous medium with variable
density and a higher order of the nonlinearity may provide a more realistic
model. However, the addition of these perturbation effects could mean that
the system is no longer completely integrable (see an example in Refs.~\cite%
{AblowClark91}, \cite{Clark88}, \cite{GagWint93}, and \cite{Suslov11}).
Hence it is of interest to determine under what conditions the perturbation
preserves the completely integrability. Some nonintegrable systems in
question posses important in practice classes of exact solutions (see, for
instance, \cite{ConteMusette93}, \cite{ConteMusette09}, \cite{MarcHugConte94}%
, \cite{MusetteConte03}\ and the references therein). These solutions may
serve as a natural starting point of perturbation methods and also provide a
useful testing ground for numerical investigation of more complicated models.

In this paper, we show that the following nonlinear PDE in the
two-dimensional space-time continuum:%
\begin{equation}
\frac{\partial \psi }{\partial t}+Q\left( \frac{\partial }{\partial x}%
,x\right) \psi =P\left( \psi ,\psi ^{\ast },\frac{\partial \psi }{\partial x}%
,\frac{\partial \psi ^{\ast }}{\partial x}\right) ,  \label{GeneralPDE}
\end{equation}%
where $Q$ is a quadratic of two (noncommuting) operators $\partial /\partial
x$ and $x$ with time-dependent coefficients, the asterisk denotes the
complex conjugation, and $P$ is certain gauge invariant nonlinearity, can be
reduced by a simple change of variables to the autonomous form:%
\begin{equation}
\frac{\partial \chi }{\partial \tau }+\frac{\partial ^{2}\chi }{\partial \xi
^{2}}=R\left( \chi ,\chi ^{\ast },\frac{\partial \chi }{\partial \xi }%
,\left( \frac{\partial \chi }{\partial \xi }\right) ^{\ast }\right) ,
\label{GeneralStandard}
\end{equation}%
which is somehow easier to analyze by a variety of available tools (in
general, the new time variable $\tau $ may be complex-valued and the new
space variable $\xi $ is a linear complex-valued function of $x;$ the
precise form of this transformation is given by Theorem~1 below). Among
important in applications examples are the derivative nonlinear Schr\"{o}%
dinger equation, which describes the propagation of circular polarized
nonlinear Alfv\'{e}n waves in plasma physics, the quintic complex
Ginzburg--Landau equation, the Gerdjikov--Ivanov equation, and others. Here,
we mainly concentrate on variants of the nonlinear Schr\"{o}dinger equation
for which the unperturbed models are known to be completely integrable
and/or have explicit solutions.

The paper is organized as follows. In the next section, we transform (\ref%
{GeneralPDE}) into the standard form (\ref{GeneralStandard}). An overview of
integrable autonomous cases and some exact solutions of important
nonintegrable PDE models are given in sections~3 to 6 in order to make our
presentation as self-contain as possible (a detailed bibliography is
provided). Some applications are discussed in section~7. In particular, we
study self-similar and nonspreading wave packets for cubic nonlinear Schr%
\"{o}dinger equations and quintic complex Ginzburg--Landau equations when
some other interesting coherent structures are also available. A
self-focusing wave packet for the linear Schr\"{o}dinger equation, an
\textquotedblleft Airy gun\textquotedblright , which self-accelerates to an
infinite velocity in a finite time, is found in analogy to the well-known
blow up property of nonlinear PDE solutions. A natural hierarchy of these
wave packets is discussed from a \textquotedblleft hidden\textquotedblright\
symmetry view point and applications to inhomogeneous media including fiber
optics, atmospheric plasma, hydrodynamics, and Bose condensation are briefly
reviewed.

\section{Transforming Nonlinear Schr\"{o}dinger Equation into Autonomous
Forms}

We start from the nonautonomous nonlinear Schr\"{o}dinger equation%
\begin{equation}
i\frac{\partial \psi }{\partial t}=H\left( t\right) \psi +h\left( t\right)
\left\vert \psi \right\vert ^{p}\psi  \label{Schroedinger}
\end{equation}%
on $%
\mathbb{R}
,$ where the variable Hamiltonian $H=Q\left( p,x,t\right) $ is an arbitrary
quadratic form of two operators $p=-i\partial /\partial x$ and $x,$ namely,%
\begin{equation}
i\psi _{t}=-a\left( t\right) \psi _{xx}+b\left( t\right) x^{2}\psi -ic\left(
t\right) x\psi _{x}-id\left( t\right) \psi -f\left( t\right) x\psi +ig\left(
t\right) \psi _{x}+h\left( t\right) \left\vert \psi \right\vert ^{p}\psi
\label{SchroedingerQuadratic}
\end{equation}%
($a,$ $b,$ $c,$ $d,$ $f,$ and $g$ are suitable real-valued functions of time
only) under the following integrability condition \cite{SuazoSuslovSol}, 
\cite{Suslov11}:%
\begin{equation}
h=h_{0}a\left( t\right) \beta ^{2}\left( t\right) \mu ^{p/2}\left( t\right)
=h_{0}\beta ^{2}\left( 0\right) \mu ^{2}\left( 0\right) \frac{a\left(
t\right) \lambda ^{2}\left( t\right) }{\left( \mu \left( t\right) \right)
^{2-p/2}}  \label{SolInt2a}
\end{equation}%
($h_{0}$ is a constant, functions $\beta ,$ $\lambda ,$ and $\mu $ will be
defined below).

(The corresponding linear quantum systems, when $h=0,$ are known as the 
\textit{generalized (driven) harmonic oscillators}. Some examples, a general
approach and known elementary solutions can be found in Refs.~\cite%
{Cor-Sot:Lop:Sua:Sus}, \cite{Cor-Sot:Sua:Sus}, \cite{Cor-Sot:Sua:SusInv}, 
\cite{Cor-Sot:Sus}, \cite{Dod:Mal:Man75}, \cite{Lop:Sus}, \cite{Wolf81}, 
\cite{Yeon:Lee:Um:George:Pandey93}, and \cite{Zhukov99}.)

For completeness we combine the results established in \cite{Lan:Lop:Sus}
and \cite{Suslov11}.

\begin{lemma}
The substitution%
\begin{equation}
\psi =\frac{e^{i\left( \alpha \left( t\right) x^{2}+\delta \left( t\right)
x+\kappa \left( t\right) \right) }}{\sqrt{\mu \left( t\right) }}\ \chi
\left( \xi ,\tau \right) ,\qquad \xi =\beta \left( t\right) x+\varepsilon
\left( t\right) ,\quad \tau =\gamma \left( t\right)  \label{Gauge}
\end{equation}%
transforms the non-autonomous and inhomogeneous Schr\"{o}dinger equation (%
\ref{SchroedingerQuadratic}) into the autonomous form%
\begin{equation}
i\chi _{\tau }+h_{0}\left\vert \chi \right\vert ^{p}\chi =\chi _{\xi \xi
}-c_{0}\xi ^{2}\chi \qquad \left( c_{0}=0,1\right)  \label{ASEq}
\end{equation}%
provided that%
\begin{equation}
\frac{d\alpha }{dt}+b+2c\alpha +4a\alpha ^{2}=c_{0}a\beta ^{4},  \label{SysA}
\end{equation}%
\begin{equation}
\frac{d\beta }{dt}+\left( c+4a\alpha \right) \beta =0,  \label{SysB}
\end{equation}%
\begin{equation}
\frac{d\gamma }{dt}+a\beta ^{2}=0  \label{SysC}
\end{equation}%
and%
\begin{equation}
\frac{d\delta }{dt}+\left( c+4a\alpha \right) \delta =f+2g\alpha
+2c_{0}a\beta ^{3}\varepsilon ,  \label{SysD}
\end{equation}%
\begin{equation}
\frac{d\varepsilon }{dt}=\left( g-2a\delta \right) \beta ,  \label{SysE}
\end{equation}%
\begin{equation}
\frac{d\kappa }{dt}=g\delta -a\delta ^{2}+c_{0}a\beta ^{2}\varepsilon ^{2}.
\label{SysF}
\end{equation}%
Here%
\begin{equation}
\alpha =\frac{1}{4a}\frac{\mu ^{\prime }}{\mu }-\frac{d}{2a}.  \label{Alpha}
\end{equation}
\end{lemma}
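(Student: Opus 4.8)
The plan is to prove the lemma by direct substitution and matching of coefficients, so that the seven ODEs \eqref{SysA}--\eqref{SysF}, the defining relation \eqref{Alpha}, and the integrability condition \eqref{SolInt2a} emerge exactly as the conditions under which \eqref{SchroedingerQuadratic} collapses to \eqref{ASEq}. Writing $\phi=\alpha x^{2}+\delta x+\kappa$ so that $\psi=\mu^{-1/2}e^{i\phi}\chi(\xi,\tau)$, I would first record the chain-rule facts $\chi_{x}=\beta\chi_{\xi}$, $\chi_{t}=(\beta^{\prime}x+\varepsilon^{\prime})\chi_{\xi}+\gamma^{\prime}\chi_{\tau}$, $\phi_{x}=2\alpha x+\delta$, $\phi_{xx}=2\alpha$, $\phi_{t}=\alpha^{\prime}x^{2}+\delta^{\prime}x+\kappa^{\prime}$, and then compute $\psi_{x}$, $\psi_{xx}$ and $\psi_{t}$ explicitly. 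Substituting into \eqref{SchroedingerQuadratic} and dividing out the common factor $\mu^{-1/2}e^{i\phi}$ turns the PDE into a relation among $\chi_{\tau}$, $\chi_{\xi\xi}$, $\chi_{\xi}$, $\chi$ and the nonlinear term. The only point to note about the nonlinearity is that $|\psi|^{p}=\mu^{-p/2}|\chi|^{p}$ because $|e^{i\phi}|=1$ for real $\phi$; this is precisely the gauge invariance that lets the phase pass through $P$ untouched, so that $h|\psi|^{p}\psi$ contributes $h\mu^{-p/2}|\chi|^{p}\chi$ after the factorization.

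Next I would organize the resulting identity as a polynomial in $x$ whose coefficients are the differential monomials in $\chi$. The derivative $\chi_{\tau}$ occurs only with coefficient $i\gamma^{\prime}$ and $\chi_{\xi\xi}$ only with coefficient $-a\beta^{2}$; requiring these to match the target pairing of $i\chi_{\tau}$ against $\chi_{\xi\xi}$ in \eqref{ASEq} forces $\gamma^{\prime}+a\beta^{2}=0$, which is \eqref{SysC}, and simultaneously fixes the overall normalizing factor $\Lambda=\gamma^{\prime}=-a\beta^{2}$ by which the whole equation is to be divided. The coefficient of $\chi_{\xi}$ must then vanish identically in $x$, since \eqref{ASEq} contains no first-order derivative; reading off its $x^{1}$ and $x^{0}$ parts gives $\beta^{\prime}+(c+4a\alpha)\beta=0$ and $\varepsilon^{\prime}=(g-2a\delta)\beta$, i.e. \eqref{SysB} and \eqref{SysE}. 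Matching the nonlinear coefficient, $h\mu^{-p/2}/(-a\beta^{2})=-h_{0}$, reproduces the integrability condition \eqref{SolInt2a}.

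It remains to treat the coefficient of $\chi$ itself, which is a quadratic in $x$ and, after division by $\Lambda$, must equal $-c_{0}\xi^{2}=-c_{0}(\beta x+\varepsilon)^{2}$. Equating the $x^{2}$, $x^{1}$ and $x^{0}$ coefficients produces $\alpha^{\prime}+b+2c\alpha+4a\alpha^{2}=c_{0}a\beta^{4}$, $\delta^{\prime}+(c+4a\alpha)\delta=f+2g\alpha+2c_{0}a\beta^{3}\varepsilon$, and the $x^{0}$ relation for $\kappa^{\prime}$, that is \eqref{SysA}, \eqref{SysD} and \eqref{SysF}. The step I expect to be the genuine obstacle is the $x^{0}$ part of the $\chi$-coefficient: collecting the contributions from $\psi_{xx}$, from the damping term $-id\psi$, and from $\partial_{t}\mu^{-1/2}$ leaves an imaginary piece $i\bigl(-2a\alpha-d+\mu^{\prime}/(2\mu)\bigr)$. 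For $\kappa$ to be real--so that \eqref{Gauge} is a legitimate phase with real $\mu$--this imaginary part must cancel, and one checks that it vanishes identically under the choice \eqref{Alpha}, since $\alpha=\tfrac{1}{4a}\mu^{\prime}/\mu-\tfrac{d}{2a}$ is equivalent to $\mu^{\prime}/(2\mu)=2a\alpha+d$. Thus \eqref{Alpha} is exactly the relation that disposes of the spurious imaginary terms and reduces the $x^{0}$ equation to the real ODE \eqref{SysF}. The bulk of the proof is therefore careful bookkeeping of powers of $x$ and of real versus imaginary parts; the decisive and least mechanical point is recognizing that \eqref{Alpha} is forced by reality of the phase and is what closes the system.
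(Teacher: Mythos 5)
Your proof is correct and is in essence the paper's own argument: the paper establishes Lemma~1 by exactly this direct substitution of (\ref{Gauge}) into (\ref{SchroedingerQuadratic}), with the verification of the resulting coefficient identities delegated to a \textsl{Mathematica} computation by Koutschan and to the earlier references. Your bookkeeping --- matching the $\chi_{\tau}$/$\chi_{\xi\xi}$ pair to force (\ref{SysC}) and fix the normalizing factor, killing the $\chi_{\xi}$ coefficient to get (\ref{SysB}) and (\ref{SysE}), splitting the quadratic-in-$x$ coefficient of $\chi$ into (\ref{SysA}), (\ref{SysD}), (\ref{SysF}), reading off (\ref{SolInt2a}) from the nonlinear term, and recognizing (\ref{Alpha}) as precisely the vanishing of the residual imaginary part of the $x^{0}$ coefficient --- is the hand-written version of that same verification, and every condition comes out correctly.
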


(A \textsl{Mathematica} based proof of Lemma~1 is given by Christoph
Koutschan \cite{Kouchan11}.)

The substitution (\ref{Alpha}) reduces the inhomogeneous equation (\ref{SysA}%
) to the second order ordinary differential equation:%
\begin{equation}
\mu ^{\prime \prime }-\tau \left( t\right) \mu ^{\prime }+4\sigma \left(
t\right) \mu =c_{0}\left( 2a\right) ^{2}\beta ^{4}\mu ,  \label{CharEq}
\end{equation}%
that has the familiar time-varying coefficients%
\begin{equation}
\tau \left( t\right) =\frac{a^{\prime }}{a}-2c+4d,\qquad \sigma \left(
t\right) =ab-cd+d^{2}+\frac{d}{2}\left( \frac{a^{\prime }}{a}-\frac{%
d^{\prime }}{d}\right) .  \label{TauSigma}
\end{equation}

The time-dependent coefficients $\alpha _{0},$ $\beta _{0},$ $\gamma _{0},$ $%
\delta _{0},$ $\varepsilon _{0},$ $\kappa _{0}$ satisfy the Riccati-type
system (\ref{SysA})--(\ref{SysF}) with $c_{0}=0$ and are given as follows 
\cite{Cor-Sot:Lop:Sua:Sus}, \cite{Suslov10}:%
\begin{eqnarray}
&&\alpha _{0}\left( t\right) =\frac{1}{4a\left( t\right) }\frac{\mu
_{0}^{\prime }\left( t\right) }{\mu _{0}\left( t\right) }-\frac{d\left(
t\right) }{2a\left( t\right) },  \label{A0} \\
&&\beta _{0}\left( t\right) =-\frac{\lambda \left( t\right) }{\mu _{0}\left(
t\right) },\qquad \lambda \left( t\right) =\exp \left( -\int_{0}^{t}\left(
c\left( s\right) -2d\left( s\right) \right) \ ds\right) ,  \label{B0} \\
&&\gamma _{0}\left( t\right) =\frac{1}{2\mu _{1}\left( 0\right) }\frac{\mu
_{1}\left( t\right) }{\mu _{0}\left( t\right) }+\frac{d\left( 0\right) }{%
2a\left( 0\right) }  \label{C0}
\end{eqnarray}%
and%
\begin{equation}
\delta _{0}\left( t\right) =\frac{\lambda \left( t\right) }{\mu _{0}\left(
t\right) }\int_{0}^{t}\left[ \left( f\left( s\right) -\frac{d\left( s\right) 
}{a\left( s\right) }g\left( s\right) \right) \mu _{0}\left( s\right) +\frac{%
g\left( s\right) }{2a\left( s\right) }\mu _{0}^{\prime }\left( s\right) %
\right] \frac{ds}{\lambda \left( s\right) },  \label{D0}
\end{equation}%
\begin{eqnarray}
\varepsilon _{0}\left( t\right) &=&-\frac{2a\left( t\right) \lambda \left(
t\right) }{\mu _{0}^{\prime }\left( t\right) }\delta _{0}\left( t\right)
+8\int_{0}^{t}\frac{a\left( s\right) \sigma \left( s\right) \lambda \left(
s\right) }{\left( \mu _{0}^{\prime }\left( s\right) \right) ^{2}}\left( \mu
_{0}\left( s\right) \delta _{0}\left( s\right) \right) \ ds  \label{E0} \\
&&\quad +2\int_{0}^{t}\frac{a\left( s\right) \lambda \left( s\right) }{\mu
_{0}^{\prime }\left( s\right) }\left( f\left( s\right) -\frac{d\left(
s\right) }{a\left( s\right) }g\left( s\right) \right) \ ds,  \notag
\end{eqnarray}%
\begin{eqnarray}
\kappa _{0}\left( t\right) &=&\frac{a\left( t\right) \mu _{0}\left( t\right) 
}{\mu _{0}^{\prime }\left( t\right) }\delta _{0}^{2}\left( t\right)
-4\int_{0}^{t}\frac{a\left( s\right) \sigma \left( s\right) }{\left( \mu
_{0}^{\prime }\left( s\right) \right) ^{2}}\left( \mu _{0}\left( s\right)
\delta _{0}\left( s\right) \right) ^{2}\ ds  \label{F0} \\
&&\quad -2\int_{0}^{t}\frac{a\left( s\right) }{\mu _{0}^{\prime }\left(
s\right) }\left( \mu _{0}\left( s\right) \delta _{0}\left( s\right) \right)
\left( f\left( s\right) -\frac{d\left( s\right) }{a\left( s\right) }g\left(
s\right) \right) \ ds  \notag
\end{eqnarray}%
$(\delta _{0}\left( 0\right) =-\varepsilon _{0}\left( 0\right) =g\left(
0\right) /\left( 2a\left( 0\right) \right) $ and $\kappa _{0}\left( 0\right)
=0)$ provided that $\mu _{0}$ and $\mu _{1}$ are the standard (real-valued)
solutions of equation (\ref{CharEq}) when $c_{0}=0$ corresponding to the
initial conditions $\mu _{0}\left( 0\right) =0,$ $\mu _{0}^{\prime }\left(
0\right) =2a\left( 0\right) \neq 0$ and $\mu _{1}\left( 0\right) \neq 0,$ $%
\mu _{1}^{\prime }\left( 0\right) =0.$ (Proofs of these facts are outlined
in Refs.~\cite{Cor-Sot:Lop:Sua:Sus} and \cite{Cor-Sot:SusDPO}. Here, the
integrals are treated in the most possible general way which may include
stochastic calculus.)

The systems (\ref{SysA})--(\ref{SysF}) can be solved by the following
variants of a nonlinear superposition principle \cite{Lan:Lop:Sus} and \cite%
{SuazoSuslovSol}.

\begin{lemma}
The solution of the Riccati-type system (\ref{SysA})--(\ref{SysF}) when $%
c_{0}=0$ is given by%
\begin{eqnarray}
&&\mu \left( t\right) =2\mu \left( 0\right) \mu _{0}\left( t\right) \left(
\alpha \left( 0\right) +\gamma _{0}\left( t\right) \right) ,  \label{MKernel}
\\
&&\alpha \left( t\right) =\alpha _{0}\left( t\right) -\frac{\beta
_{0}^{2}\left( t\right) }{4\left( \alpha \left( 0\right) +\gamma _{0}\left(
t\right) \right) },  \label{AKernel} \\
&&\beta \left( t\right) =-\frac{\beta \left( 0\right) \beta _{0}\left(
t\right) }{2\left( \alpha \left( 0\right) +\gamma _{0}\left( t\right)
\right) }=\frac{\beta \left( 0\right) \mu \left( 0\right) }{\mu \left(
t\right) }\lambda \left( t\right) ,  \label{BKernel} \\
&&\gamma \left( t\right) =\gamma \left( 0\right) -\frac{\beta ^{2}\left(
0\right) }{4\left( \alpha \left( 0\right) +\gamma _{0}\left( t\right)
\right) }  \label{CKernel}
\end{eqnarray}%
and%
\begin{eqnarray}
\delta \left( t\right) &=&\delta _{0}\left( t\right) -\frac{\beta _{0}\left(
t\right) \left( \delta \left( 0\right) +\varepsilon _{0}\left( t\right)
\right) }{2\left( \alpha \left( 0\right) +\gamma _{0}\left( t\right) \right) 
},  \label{DKernel} \\
\varepsilon \left( t\right) &=&\varepsilon \left( 0\right) -\frac{\beta
\left( 0\right) \left( \delta \left( 0\right) +\varepsilon _{0}\left(
t\right) \right) }{2\left( \alpha \left( 0\right) +\gamma _{0}\left(
t\right) \right) },  \label{EKernel} \\
\kappa \left( t\right) &=&\kappa \left( 0\right) +\kappa _{0}\left( t\right)
-\frac{\left( \delta \left( 0\right) +\varepsilon _{0}\left( t\right)
\right) ^{2}}{4\left( \alpha \left( 0\right) +\gamma _{0}\left( t\right)
\right) }  \label{FKernel}
\end{eqnarray}%
in terms of the fundamental solution (\ref{A0})--(\ref{F0}) subject to the
arbitrary initial data $\mu \left( 0\right) ,$ $\alpha \left( 0\right) ,$ $%
\beta \left( 0\right) \neq 0,$ $\gamma \left( 0\right) ,$ $\delta \left(
0\right) ,$ $\varepsilon \left( 0\right) ,$ $\kappa \left( 0\right) .$
\end{lemma}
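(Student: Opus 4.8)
The plan is to verify the formulas \eqref{MKernel}--\eqref{FKernel} by direct substitution into the system \eqref{SysA}--\eqref{SysF} with $c_0=0$, using that the fundamental solution \eqref{A0}--\eqref{F0} already satisfies that system. Two abbreviations organize the whole computation: write $D(t)=\alpha(0)+\gamma_0(t)$ for the common denominator and $E(t)=\delta(0)+\varepsilon_0(t)$ for the combination appearing in the driven block. The single identity that powers everything is $D'(t)=\gamma_0'(t)=-a\beta_0^2$, which is just \eqref{SysC} for the fundamental solution; together with $\beta_0'=-(c+4a\alpha_0)\beta_0$ and $E'=\varepsilon_0'=(g-2a\delta_0)\beta_0$ it lets one differentiate each formula and collapse the result.

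I would handle the homogeneous block $(\mu,\alpha,\beta,\gamma)$ first and exploit the linear structure rather than grinding the Riccati equation by hand. Substituting \eqref{C0} into \eqref{MKernel} shows that $\mu=2\mu(0)[(\alpha(0)+d(0)/2a(0))\mu_0+(1/2\mu_1(0))\mu_1]$ is a \emph{linear combination} of the two basic solutions $\mu_0,\mu_1$ of the characteristic equation \eqref{CharEq}; hence $\mu$ solves \eqref{CharEq}, and then \eqref{Alpha} forces the $\alpha$ of \eqref{AKernel} to satisfy the Riccati equation \eqref{SysA}. The same representation settles the initial data cleanly: reading off $\mu(0)$ and $\mu'(0)$ from $\mu_0(0)=0,\ \mu_0'(0)=2a(0),\ \mu_1(0)\neq0,\ \mu_1'(0)=0$ gives $\mu(0^{+})=\mu(0)$ and, through \eqref{Alpha}, $\alpha(0^{+})=\alpha(0)$, so no delicate expansion of the singular fundamental solution is needed. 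Then \eqref{SysB} and \eqref{SysC} each follow by one differentiation: \eqref{SysC} is immediate from $D'=-a\beta_0^2$, and \eqref{SysB} uses $c+4a\alpha=c+4a\alpha_0-a\beta_0^2/D$. The alternative form $\beta=\beta(0)\mu(0)\lambda/\mu$ in \eqref{BKernel}, evaluated with $\lambda(0)=1$, gives $\beta(0^{+})=\beta(0)$, while $\gamma(0^{+})=\gamma(0)$ because $\gamma_0\to\infty$ as $t\to0^{+}$ makes $D\to\infty$ in \eqref{CKernel}.

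For the driven block $(\delta,\varepsilon,\kappa)$ I would substitute \eqref{DKernel}--\eqref{FKernel} directly. Differentiating $\varepsilon=\varepsilon(0)-\beta(0)E/2D$ and using $E'=(g-2a\delta_0)\beta_0$, $D'=-a\beta_0^2$ reproduces $(g-2a\delta)\beta$ once $\delta$ and $\beta$ are inserted; the analogous computation for $\kappa=\kappa(0)+\kappa_0-E^2/4D$ reproduces $g\delta-a\delta^2$, the quadratic term $E^2$ being exactly what generates the cross terms; and \eqref{SysD} follows from $\delta_0'=-(c+4a\alpha_0)\delta_0+f+2g\alpha_0$ together with $f+2g\alpha=f+2g\alpha_0-g\beta_0^2/2D$. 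The initial values are subtler here. Since $D\to\infty$ while $E$ stays finite, the corrections $\beta(0)E/2D$ and $E^2/4D$ vanish, giving $\varepsilon(0^{+})=\varepsilon(0)$ and $\kappa(0^{+})=\kappa(0)$; for $\delta$, however, the ratio $\beta_0/2D$ tends to the finite limit $-1$, so one must combine it with the given data $\delta_0(0)=-\varepsilon_0(0)=g(0)/2a(0)$ to recover $\delta(0^{+})=\delta(0)$.

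The main obstacle is purely one of organization and bookkeeping in the driven block: the products $(g-2a\delta)\beta$ and $g\delta-a\delta^2$ expand into several rational terms in $1/D$ and $1/D^2$, and one must check that the $1/D^2$ contributions coming from $D'=-a\beta_0^2$ cancel against those produced by expanding $\delta^2$ and the term $a\beta_0E/D$. Keeping $D$ and $E$ as single symbols, and inserting the fundamental-solution equations only at the end, is what keeps this cancellation transparent; conceptually it reflects that the maps \eqref{Gauge} form a group and that \eqref{MKernel}--\eqref{FKernel} are nothing but its composition law, which guarantees the cancellations in advance.
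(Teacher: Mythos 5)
Your proof is correct, and every identity it leans on checks out: with $D=\alpha(0)+\gamma_{0}$ and $E=\delta(0)+\varepsilon_{0}$, the fundamental solution supplies $D'=-a\beta_{0}^{2}$, $\beta_{0}'=-(c+4a\alpha_{0})\beta_{0}$, $E'=(g-2a\delta_{0})\beta_{0}$, and substituting (\ref{AKernel})--(\ref{FKernel}) into (\ref{SysA})--(\ref{SysF}) then closes term by term, with the $1/D^{2}$ contributions cancelling exactly as you predict; likewise your treatment of the singular initial data is sound, since $\mu=2\mu(0)\left(\alpha(0)+d(0)/(2a(0))\right)\mu_{0}+\mu(0)\mu_{1}/\mu_{1}(0)$ is a constant-coefficient combination of $\mu_{0},\mu_{1}$, and the limits $|D|\to\infty$ and $\beta_{0}/(2D)\to-1$ combined with $\delta_{0}(0)=-\varepsilon_{0}(0)=g(0)/(2a(0))$ recover all seven initial values. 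Note, however, that this is not the paper's route, because the paper contains no proof of this lemma at all: the formulas are presented as a ``nonlinear superposition principle'' with the derivation deferred to \cite{Lan:Lop:Sus} and \cite{SuazoSuslovSol}, where they arise constructively from composing the kernel parameters of the transformation (\ref{Gauge}) --- precisely the group-composition picture you invoke only as heuristic color in your last paragraph. So your argument is the more elementary and self-contained one: it needs nothing beyond the stated facts that $(\alpha_{0},\dots,\kappa_{0})$ solves the system and the initial behavior of $\mu_{0},\mu_{1}$, at the cost of not explaining where the formulas come from; the cited derivation explains their origin but imports the machinery of those references. Two small points would tighten your write-up. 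First, to justify ``the solution'' you should add the one-line remark that the system is triangular --- Riccati for $\alpha$, then linear equations for $\beta,\gamma,\delta,\varepsilon,\kappa$ --- so the initial value problem has a unique solution and your verified candidate is it. Second, when you say (\ref{Alpha}) ``forces'' (\ref{AKernel}) to satisfy (\ref{SysA}), you are tacitly using two unstated steps: the identity $\mu'/\mu=\mu_{0}'/\mu_{0}+D'/D$, which shows that (\ref{AKernel}) is exactly (\ref{Alpha}) evaluated on the $\mu$ of (\ref{MKernel}), and the converse of the direction stated in the paper (from (\ref{CharEq}) back to (\ref{SysA})); both are true and short, but they deserve a sentence, or can be bypassed by the same direct substitution you use for the remaining equations.
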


\begin{lemma}
The solution of the Ermakov-type system (\ref{SysA})--(\ref{SysF}) when $%
c_{0}=1\left( \neq 0\right) $ is given by%
\begin{eqnarray}
&&\mu =\mu \left( 0\right) \mu _{0}\sqrt{\beta ^{4}\left( 0\right) +4\left(
\alpha \left( 0\right) +\gamma _{0}\right) ^{2}},  \label{MKernelOsc} \\
&&\alpha =\alpha _{0}-\beta _{0}^{2}\frac{\alpha \left( 0\right) +\gamma _{0}%
}{\beta ^{4}\left( 0\right) +4\left( \alpha \left( 0\right) +\gamma
_{0}\right) ^{2}},  \label{AKernelOsc} \\
&&\beta =-\frac{\beta \left( 0\right) \beta _{0}}{\sqrt{\beta ^{4}\left(
0\right) +4\left( \alpha \left( 0\right) +\gamma _{0}\right) ^{2}}}=\frac{%
\beta \left( 0\right) \mu \left( 0\right) }{\mu \left( t\right) }\lambda
\left( t\right) ,  \label{BKernelOsc} \\
&&\gamma =\gamma \left( 0\right) -\frac{1}{2}\arctan \frac{\beta ^{2}\left(
0\right) }{2\left( \alpha \left( 0\right) +\gamma _{0}\right) },\quad
a\left( 0\right) >0  \label{CKernelOsc}
\end{eqnarray}%
and%
\begin{eqnarray}
&&\delta =\delta _{0}-\beta _{0}\frac{\varepsilon \left( 0\right) \beta
^{3}\left( 0\right) +2\left( \alpha \left( 0\right) +\gamma _{0}\right)
\left( \delta \left( 0\right) +\varepsilon _{0}\right) }{\beta ^{4}\left(
0\right) +4\left( \alpha \left( 0\right) +\gamma _{0}\right) ^{2}},
\label{DKernelOsc} \\
&&\varepsilon =\frac{2\varepsilon \left( 0\right) \left( \alpha \left(
0\right) +\gamma _{0}\right) -\beta \left( 0\right) \left( \delta \left(
0\right) +\varepsilon _{0}\right) }{\sqrt{\beta ^{4}\left( 0\right) +4\left(
\alpha \left( 0\right) +\gamma _{0}\right) ^{2}}},  \label{EKernelOsc} \\
&&\kappa =\kappa \left( 0\right) +\kappa _{0}-\varepsilon \left( 0\right)
\beta ^{3}\left( 0\right) \frac{\delta \left( 0\right) +\varepsilon _{0}}{%
\beta ^{4}\left( 0\right) +4\left( \alpha \left( 0\right) +\gamma
_{0}\right) ^{2}}  \label{FKernelOsc} \\
&&\qquad +\left( \alpha \left( 0\right) +\gamma _{0}\right) \frac{%
\varepsilon ^{2}\left( 0\right) \beta ^{2}\left( 0\right) -\left( \delta
\left( 0\right) +\varepsilon _{0}\right) ^{2}}{\beta ^{4}\left( 0\right)
+4\left( \alpha \left( 0\right) +\gamma _{0}\right) ^{2}}  \notag
\end{eqnarray}%
in terms of the fundamental solution (\ref{A0})--(\ref{F0}) subject to the
arbitrary initial data $\mu \left( 0\right) ,$ $\alpha \left( 0\right) ,$ $%
\beta \left( 0\right) \neq 0,$ $\gamma \left( 0\right) ,$ $\delta \left(
0\right) ,$ $\varepsilon \left( 0\right) ,$ $\kappa \left( 0\right) .$
\end{lemma}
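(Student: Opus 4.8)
The plan is to treat the system (\ref{SysA})--(\ref{SysF}) with $c_{0}=1$ as an \emph{Ermakov system} and to reduce it, one variable at a time, to the fundamental solution (\ref{A0})--(\ref{F0}) of the associated free ($c_{0}=0$) problem, exactly as in the preceding lemma but now with an oscillatory twist. First I would recast the Riccati equation (\ref{SysA}) as the scalar characteristic equation (\ref{CharEq}) by the substitution (\ref{Alpha}); at the same time (\ref{SysB}) integrates immediately, since $c+4a\alpha=(c-2d)+\mu'/\mu$, to give $\beta(t)\mu(t)=\beta(0)\mu(0)\lambda(t)$. Inserting this relation into (\ref{CharEq}) eliminates $\beta$ and turns the characteristic equation into the Ermakov--Pinney equation
\[
\mu''-\tau\mu'+4\sigma\mu=\frac{(2a)^{2}\beta^{4}(0)\mu^{4}(0)\lambda^{4}}{\mu^{3}},
\]
whose linear part is precisely the operator solved by the standard pair $\mu_{0},\mu_{1}$.

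The classical Pinney superposition principle then produces the positive solution in the form $\mu^{2}=\mathcal{A}\mu_{0}^{2}+2\mathcal{B}\mu_{0}\mu_{1}+\mathcal{C}\mu_{1}^{2}$; replacing $\mu_{1}$ by $\mu_{0}\gamma_{0}$ through (\ref{C0}) casts this into the claimed shape (\ref{MKernelOsc}). I would determine $\mathcal{A},\mathcal{B},\mathcal{C}$ from the arbitrary data $\mu(0),\alpha(0),\beta(0)$ using (\ref{Alpha}) at $t=0$, and then check the compatibility constraint $\mathcal{A}\mathcal{C}-\mathcal{B}^{2}=K/W^{2}$ demanded by Pinney's theorem: with $K=(2a)^{2}\beta^{4}(0)\mu^{4}(0)\lambda^{4}$ and, by Abel's identity for (\ref{CharEq}), $W=W(0)(a/a(0))\lambda^{2}$ and $W(0)=-2a(0)\mu_{1}(0)$, the ratio $K/W^{2}$ collapses to the constant $\beta^{4}(0)\mu^{4}(0)/\mu_{1}^{2}(0)$, matching $\mathcal{A}\mathcal{C}-\mathcal{B}^{2}$. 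Then (\ref{AKernelOsc}) follows from (\ref{Alpha}), (\ref{BKernelOsc}) from $\beta\mu=\beta(0)\mu(0)\lambda$, and $\gamma$ from integrating (\ref{SysC}): since $a\beta^{2}=-\gamma'$ and $a\lambda^{2}/\mu_{0}^{2}=-\gamma_{0}'$, the quadrature reduces to $\int\beta^{2}(0)\,d\gamma_{0}/(\beta^{4}(0)+4(\alpha(0)+\gamma_{0})^{2})$, which yields the $\arctan$ of (\ref{CKernelOsc}). Throughout one must handle the limits at $t=0$, where $\mu_{0}(0)=0$ and $\gamma_{0}(0)=\infty$ conspire so that $\mu_{0}\sqrt{\beta^{4}(0)+4(\alpha(0)+\gamma_{0})^{2}}\to1$ and the prescribed initial values are recovered.

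For the drift variables, once $\alpha$ and $\beta$ are known, (\ref{SysD})--(\ref{SysE}) become a linear inhomogeneous system. Writing $\tilde\delta=\delta/\beta$ and using $c+4a\alpha=-\beta'/\beta$ together with $a\beta^{2}=-\gamma'$ collapses the homogeneous part to $\tilde\delta'=-2\gamma'\varepsilon$, $\varepsilon'=2\gamma'\tilde\delta$, i.e. $w'=2i\gamma'w$ for $w=\tilde\delta+i\varepsilon$, a pure rotation by $2\gamma$. Solving the inhomogeneous system by variation of parameters against this rotation, with the particular solution identified as the fundamental drift $\delta_{0},\varepsilon_{0}$ of (\ref{D0})--(\ref{E0}), gives (\ref{DKernelOsc})--(\ref{EKernelOsc}); the rotation angle is $\theta=2(\gamma(0)-\gamma)$, whose cosine and sine are $2(\alpha(0)+\gamma_{0})/D$ and $\beta^{2}(0)/D$ with $D=\sqrt{\beta^{4}(0)+4(\alpha(0)+\gamma_{0})^{2}}$, and this is exactly how those factors enter the formulas. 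The phase $\kappa$ then follows by direct quadrature of (\ref{SysF}), using (\ref{SysC}) and (\ref{SysE}) and one integration by parts to assemble (\ref{FKernelOsc}).

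I expect the Pinney step to be the main obstacle: one must show that the nonlinear superposition actually yields a solution, which rests on the constancy of $\mathcal{A}\mathcal{C}-\mathcal{B}^{2}=K/W^{2}$, and then match the three Pinney constants to the prescribed initial data while controlling the removable singularities at $t=0$. The remaining variables satisfy linear equations, so after the amplitude $\mu$ is in hand the rest is variation of parameters and careful algebraic matching against the explicit fundamental solution (\ref{A0})--(\ref{F0}).
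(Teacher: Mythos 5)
Your treatment of the amplitude variables is essentially correct and complete in outline: reducing (\ref{SysA}) to (\ref{CharEq}) via (\ref{Alpha}), integrating (\ref{SysB}) to $\beta\mu=\beta(0)\mu(0)\lambda$, and observing that Abel's identity $W(t)=-2a(0)\mu_{1}(0)\bigl(a(t)/a(0)\bigr)\lambda^{2}(t)$ for the Wronskian of $\mu_{0},\mu_{1}$ makes $K/W^{2}=\beta^{4}(0)\mu^{4}(0)/\mu_{1}^{2}(0)$ constant is exactly the hypothesis the generalized Pinney superposition requires; from there (\ref{MKernelOsc})--(\ref{CKernelOsc}) follow as you describe, including the $\arctan$ quadrature and the removable singularities at $t=0^{+}$. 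Since the paper itself gives no proof of this lemma (it appeals to a nonlinear superposition principle established in the cited references), your constructive route is a legitimate independent derivation --- but it has one genuine gap.

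The gap is the drift step. The fundamental pair $(\delta_{0},\varepsilon_{0})$ is \emph{not} a particular solution of (\ref{SysD})--(\ref{SysE}) with $c_{0}=1$, nor of your rotation equation $w'=2i\gamma'w+F$: it satisfies the $c_{0}=0$ equations with the \emph{different} coefficients $\alpha_{0},\beta_{0}$. Already the $\varepsilon$-equation fails, because $\varepsilon_{0}'=(g-2a\delta_{0})\beta_{0}$ while a particular solution would need $\varepsilon_{0}'=(g-2a\delta_{0})\beta$, and $\beta=-\beta(0)\beta_{0}/D$ with $D=\sqrt{\beta^{4}(0)+4(\alpha(0)+\gamma_{0})^{2}}$ differs from $\beta_{0}$. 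So variation of parameters ``with particular solution $(\delta_{0},\varepsilon_{0})$'' cannot produce (\ref{DKernelOsc})--(\ref{EKernelOsc}); the integral $\int_{0}^{t}e^{-2i\gamma}F\,ds$ is left unevaluated. The missing identity is obtained by rotating the \emph{difference}: set $w=(\delta-\delta_{0})/\beta+i\varepsilon$, and combine (\ref{SysB}), (\ref{SysD}), (\ref{SysE}) with the $c_{0}=0$ equations for $(\delta_{0},\varepsilon_{0})$ to get
\begin{equation*}
w'=2i\gamma'w+\frac{2\left(\alpha(0)+\gamma_{0}\right)-i\beta^{2}(0)}{\beta(0)\,D}\,\varepsilon_{0}'
=2i\gamma'w+\frac{e^{2i\left(\gamma-\gamma(0)\right)}}{\beta(0)}\,\varepsilon_{0}',
\end{equation*}
where the second equality uses your own formula (\ref{CKernelOsc}). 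Hence $\bigl(e^{-2i\gamma}w\bigr)'=e^{-2i\gamma(0)}\varepsilon_{0}'/\beta(0)$, and integrating from $0^{+}$ (where $\delta_{0}(0)=-\varepsilon_{0}(0)$) gives $w=e^{2i\left(\gamma-\gamma(0)\right)}\bigl[\left(\delta(0)+\varepsilon_{0}(t)\right)/\beta(0)+i\varepsilon(0)\bigr]$, whose real and imaginary parts are exactly (\ref{DKernelOsc}) and (\ref{EKernelOsc}). In other words, $\delta_{0}$ must be subtracted \emph{before} dividing by $\beta$, and $\varepsilon_{0}$ enters \emph{inside} the rotated vector rather than additively as a particular solution --- this is why $\delta(0)+\varepsilon_{0}(t)$ appears as a single rotated component in the lemma. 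The quadrature for $\kappa$ in (\ref{FKernelOsc}) inherits the same structure, so your final step also requires this identity rather than a bare integration by parts.
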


Here we would like to present a new compact form of these solutions. In
order to do that, let us introduce the following complex-valued function:%
\begin{equation}
z=\left( 2\alpha \left( 0\right) +\frac{d\left( 0\right) }{a\left( 0\right) }%
\right) \mu _{0}\left( t\right) +\frac{\mu _{1}\left( t\right) }{\mu
_{1}\left( 0\right) }+i\beta ^{2}\left( 0\right) \mu _{0}\left( t\right)
\label{ComplexZ}
\end{equation}%
(a complex parametrization of Green's function and linear invariants of
generalized harmonic oscillators are also discussed in Refs.~\cite%
{Dodonov:Man'koFIAN87} and \cite{Har:Ben-Ar:Mann11}). Then%
\begin{equation}
z=c_{1}E\left( t\right) +c_{2}E^{\ast }\left( t\right) ,  \label{ComplexZE}
\end{equation}%
where the complex-valued solutions are given by%
\begin{equation}
E\left( t\right) =\frac{\mu _{1}\left( t\right) }{\mu _{1}\left( 0\right) }%
+i\mu _{0}\left( t\right) ,\qquad E^{\ast }\left( t\right) =\frac{\mu
_{1}\left( t\right) }{\mu _{1}\left( 0\right) }-i\mu _{0}\left( t\right)
\label{ComplexE}
\end{equation}%
and the corresponding complex-valued parameters are defined as follows 
\begin{equation}
c_{1}=\frac{1+\beta ^{2}\left( 0\right) }{2}-i\left( \alpha \left( 0\right) +%
\frac{d\left( 0\right) }{2a\left( 0\right) }\right) ,\qquad c_{2}=\frac{%
1-\beta ^{2}\left( 0\right) }{2}+i\left( \alpha \left( 0\right) +\frac{%
d\left( 0\right) }{2a\left( 0\right) }\right)  \label{ComplexC12}
\end{equation}%
with%
\begin{equation}
c_{1}+c_{2}=1,\qquad \left\vert c_{1}\right\vert ^{2}-\left\vert
c_{2}\right\vert ^{2}=c_{1}-c_{2}^{\ast }=\beta ^{2}\left( 0\right) .
\label{ComplexC12Relations}
\end{equation}%
In addition,%
\begin{equation}
z\left( 0\right) =c_{1}+c_{2}=1,\qquad z^{\prime }\left( 0\right) =2ia\left(
0\right) \left( c_{1}-c_{2}\right) .  \label{ComplInitData}
\end{equation}%
The inverses are given by%
\begin{equation}
E=\frac{c_{1}^{\ast }z-c_{2}z^{\ast }}{\left\vert c_{1}\right\vert
^{2}-\left\vert c_{2}\right\vert ^{2}},\qquad E^{\ast }=\frac{c_{1}z^{\ast
}-c_{2}^{\ast }z}{\left\vert c_{1}\right\vert ^{2}-\left\vert
c_{2}\right\vert ^{2}}  \label{EZ}
\end{equation}%
and%
\begin{equation}
\mu _{0}=\frac{z-z^{\ast }}{2i\left( c_{1}-c_{2}^{\ast }\right) },\qquad 
\frac{\mu _{1}}{\mu _{1}\left( 0\right) }=\frac{\left( c_{1}^{\ast
}-c_{2}^{\ast }\right) z+\left( c_{1}-c_{2}\right) z^{\ast }}{2\left(
c_{1}-c_{2}^{\ast }\right) }  \label{MZ}
\end{equation}%
in terms of our complex vector (\ref{ComplexZ}).

One can readily verify that%
\begin{eqnarray}
&&\alpha _{0}=\frac{1}{4a}\frac{\left( z-z^{\ast }\right) ^{\prime }}{%
z-z^{\ast }}+\frac{d}{2a},\qquad \beta _{0}=-2i\lambda \frac{%
c_{1}-c_{2}^{\ast }}{z-z^{\ast }},  \label{ABC0} \\
&&\gamma _{0}=\frac{\left( c_{1}^{\ast }-c_{2}^{\ast }\right) z+\left(
c_{1}-c_{2}\right) z^{\ast }}{2i\left( z-z^{\ast }\right) }+\frac{d\left(
0\right) }{2a\left( 0\right) }  \notag
\end{eqnarray}%
and equations (\ref{D0})--(\ref{F0}) can be rewritten in terms of vector $z$
in view of (\ref{MZ}).

Finally, we introduce the second complex vector:%
\begin{equation}
\zeta =\varepsilon \left( 0\right) \beta \left( 0\right) +i\left( \delta
\left( 0\right) +\varepsilon _{0}\right) =c_{3}+i\varepsilon _{0},\qquad
c_{3}=\varepsilon \left( 0\right) \beta \left( 0\right) +i\delta \left(
0\right)  \label{ComplexZeta}
\end{equation}%
and indicate the inverse relations between the essential, real and complex,
parameters:%
\begin{equation}
\alpha \left( 0\right) =\frac{c_{1}^{\ast }-c_{1}}{2i}-\frac{d\left(
0\right) }{2a\left( 0\right) },\qquad \beta ^{2}\left( 0\right)
=c_{1}-c_{2}^{\ast }=\left\vert c_{1}\right\vert ^{2}-\left\vert
c_{2}\right\vert ^{2}  \label{ZAB}
\end{equation}%
and%
\begin{equation}
\delta \left( 0\right) =\frac{c_{3}-c_{3}^{\ast }}{2i},\qquad \varepsilon
\left( 0\right) =\pm \frac{c_{3}+c_{3}^{\ast }}{2\sqrt{\left\vert
c_{1}\right\vert ^{2}-\left\vert c_{2}\right\vert ^{2}}}.  \label{ZDE}
\end{equation}%
Then solutions of the initial value problem for the Riccati and Ermakov-type
systems can be found in the following complex forms.

\begin{lemma}
In terms of our time-dependent complex vectors $\zeta $ and $z,$ the
solution of Riccati-type system is given by: $\mu =\mu \left( 0\right) \ 
\func{Re}z=$ $\mu \left( 0\right) \ \left( z+z^{\ast }\right) /2$ and%
\begin{eqnarray}
&&\alpha =\alpha _{0}-\frac{\lambda ^{2}}{2\left\vert z\right\vert ^{2}}%
\frac{\func{Im}z}{\func{Re}z}=\alpha _{0}-\frac{\lambda ^{2}}{2i\left\vert
z\right\vert ^{2}}\frac{z-z^{\ast }}{z+z^{\ast }},\quad \beta =\pm 2\lambda 
\frac{\sqrt{\left\vert c_{1}\right\vert ^{2}-\left\vert c_{2}\right\vert ^{2}%
}}{z+z^{\ast }}, \\
&&\gamma =\gamma \left( 0\right) -\frac{1}{2i}\frac{z-z^{\ast }}{z+z^{\ast }}%
,\qquad \delta =\delta _{0}+\lambda \frac{\func{Im}\zeta }{\func{Re}z}%
=\delta _{0}-i\lambda \frac{\zeta -\zeta ^{\ast }}{z+z^{\ast }}, \\
&&\varepsilon =\varepsilon \left( 0\right) \pm i\sqrt{\left\vert
c_{1}\right\vert ^{2}-\left\vert c_{2}\right\vert ^{2}}\ \frac{\zeta -\zeta
^{\ast }}{z+z^{\ast }},\quad \kappa =\kappa \left( 0\right) +\kappa _{0}+%
\frac{\left( \zeta -\zeta ^{\ast }\right) ^{2}}{8i\left( c_{1}-c_{2}^{\ast
}\right) }\ \frac{z-z^{\ast }}{z+z^{\ast }}.
\end{eqnarray}
\end{lemma}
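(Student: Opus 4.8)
The plan is to establish the present lemma by recognizing that its formulas merely repackage the nonlinear superposition solution already obtained in Lemma~2. Since Lemma~2 solves the Riccati-type system (\ref{SysA})--(\ref{SysF}) with $c_{0}=0$, it suffices to verify that each complex expression here reduces, term by term, to the corresponding kernel (\ref{MKernel})--(\ref{FKernel}). The whole argument is therefore a verification, driven by a short dictionary between the real fundamental data $\mu_{0},\mu_{1},\alpha_{0},\beta_{0},\gamma_{0},\delta_{0},\varepsilon_{0},\kappa_{0}$ and the complex vectors $z$ of (\ref{ComplexZ}) and $\zeta$ of (\ref{ComplexZeta}).

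First I would record the two basic identities forced by the definition (\ref{ComplexZ}). Because $\mu_{0}$ and $\mu_{1}$ are real, $\operatorname{Re}z=(z+z^{\ast})/2=(2\alpha(0)+d(0)/a(0))\mu_{0}+\mu_{1}/\mu_{1}(0)$ and $\operatorname{Im}z=\beta^{2}(0)\mu_{0}$. Using the explicit form of $\gamma_{0}$ in (\ref{C0}) one checks at once that $\operatorname{Re}z=2\mu_{0}(\alpha(0)+\gamma_{0})$, which is precisely the combination sitting in every denominator of (\ref{MKernel})--(\ref{FKernel}). In particular $\mu=\mu(0)\operatorname{Re}z$ reproduces (\ref{MKernel}) immediately, and (\ref{ComplInitData}) confirms the normalization $z(0)=1$. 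In the same way the definition (\ref{ComplexZeta}) yields $\operatorname{Re}\zeta=\varepsilon(0)\beta(0)$ (a constant) and $\operatorname{Im}\zeta=\delta(0)+\varepsilon_{0}$, the exact combination carried by (\ref{DKernel})--(\ref{FKernel}).

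With this dictionary in hand I would substitute $\alpha(0)+\gamma_{0}=\operatorname{Re}z/(2\mu_{0})$ and $\beta_{0}=-\lambda/\mu_{0}$ from (\ref{B0}) (equivalently the complex form (\ref{ABC0})) into the kernels. For $\gamma$ this turns (\ref{CKernel}) directly into $\gamma(0)-\tfrac12\operatorname{Im}z/\operatorname{Re}z$; for $\beta$ the relation $\beta^{2}(0)=|c_{1}|^{2}-|c_{2}|^{2}$ of (\ref{ComplexC12Relations}) converts (\ref{BKernel}) into the stated radical, the sign $\pm$ being the branch of the square root, fixed by the $t=0$ data; for $\delta$ the factor $\beta_{0}=-\lambda/\mu_{0}$ supplies the missing power of $\mu_{0}$, so that (\ref{DKernel}) becomes $\delta_{0}-i\lambda(\zeta-\zeta^{\ast})/(z+z^{\ast})$. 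The remaining identities for $\alpha$, $\varepsilon$, $\kappa$ are treated the same way, writing $\operatorname{Im}z=(z-z^{\ast})/(2i)$ and $\operatorname{Im}\zeta=(\zeta-\zeta^{\ast})/(2i)$ and invoking the inverse relations (\ref{EZ}), (\ref{MZ}), (\ref{ZAB}), (\ref{ZDE}) to trade real parameters for complex ones.

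I expect the bookkeeping for $\kappa$ to be the main obstacle. Its kernel (\ref{FKernel}) contains the quadratic term $(\delta(0)+\varepsilon_{0})^{2}/[4(\alpha(0)+\gamma_{0})]$, so matching it to $(\zeta-\zeta^{\ast})^{2}/[8i(c_{1}-c_{2}^{\ast})]\cdot(z-z^{\ast})/(z+z^{\ast})$ requires simultaneously tracking powers of $\mu_{0}$ and of $\lambda$, together with the factor $c_{1}-c_{2}^{\ast}=\beta^{2}(0)$, while keeping the $\pm$ branch consistent with the choices made for $\beta$ and $\varepsilon$. The delicate point throughout is that the complex denominators such as $z+z^{\ast}$ and $|z|^{2}$ must be shown to carry exactly the right powers of $\mu_{0}$; the cleanest safeguard is to verify every formula against Lemma~2 at $t=0$ using (\ref{ComplInitData}) and then check that the $t$-derivatives agree, i.e. that both sides solve the same system (\ref{SysA})--(\ref{SysF}), which reduces the identification to a uniqueness statement for the initial value problem.
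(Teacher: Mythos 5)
The paper itself offers no argument for this lemma --- right after Lemma~5 it only says the proofs are ``rather straightforward and left to the reader'' --- so your proposal has to be judged on its own merits. Your strategy is certainly the intended one: translate the kernels (\ref{MKernel})--(\ref{FKernel}) of Lemma~2 through the dictionary $\operatorname{Re}z=2\mu_{0}\left( \alpha \left( 0\right) +\gamma _{0}\right) $, $\operatorname{Im}z=\beta ^{2}\left( 0\right) \mu _{0}$, $\operatorname{Im}\zeta =\delta \left( 0\right) +\varepsilon _{0}$, $\beta _{0}=-\lambda /\mu _{0}$, $c_{1}-c_{2}^{\ast }=\beta ^{2}\left( 0\right) $, and your explicit computations for $\mu ,$ $\gamma ,$ $\beta ,$ $\delta $ are correct.

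The gap sits precisely in the sentence ``the remaining identities for $\alpha $, $\varepsilon $, $\kappa $ are treated the same way.'' If you actually carry out that same substitution, $\kappa $ verifies cleanly (the factor $\mu _{0}=\operatorname{Im}z/\beta ^{2}\left( 0\right) $ supplied by $z-z^{\ast }$ is exactly what (\ref{FKernel}) requires), but the two remaining printed formulas do \emph{not} come out: (\ref{AKernel}) yields $\alpha =\alpha _{0}-\lambda ^{2}/\left( 2\mu _{0}\operatorname{Re}z\right) $, which agrees with the stated $\alpha _{0}-\lambda ^{2}\operatorname{Im}z/\left( 2\left\vert z\right\vert ^{2}\operatorname{Re}z\right) $ only if $\left\vert z\right\vert ^{2}=\mu _{0}\operatorname{Im}z$, and (\ref{EKernel}) yields $\varepsilon =\varepsilon \left( 0\right) -\beta \left( 0\right) \mu _{0}\operatorname{Im}\zeta /\operatorname{Re}z$, which carries a factor $\mu _{0}$ (equivalently a factor $z-z^{\ast }$) absent from the stated expression. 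A concrete check exposes this: for the free particle $a\equiv 1,$ $b=c=d=f=g=0$ with $\alpha \left( 0\right) =0,$ $\beta \left( 0\right) =1$ one has $\mu _{0}=2t,$ $\mu _{1}\equiv 1,$ $\lambda \equiv 1,$ $z=1+2it$; the system (\ref{SysA})--(\ref{SysF}) gives $\alpha \equiv 0$ and $\varepsilon =\varepsilon \left( 0\right) -2\delta \left( 0\right) t$, whereas the displayed formulas give $\alpha =1/\left( 4t\right) -t/\left( 1+4t^{2}\right) \neq 0$ and the time-independent value $\varepsilon \left( 0\right) \mp \delta \left( 0\right) $. So a complete proof must either repair these two entries --- the forms consistent with Lemma~2 are $\alpha =\alpha _{0}-2i\lambda ^{2}\left( c_{1}-c_{2}^{\ast }\right) /\bigl( z^{2}-\left( z^{\ast }\right) ^{2}\bigr) $ and $\varepsilon =\varepsilon \left( 0\right) \pm \left( z-z^{\ast }\right) \left( \zeta -\zeta ^{\ast }\right) /\bigl( 2\sqrt{\left\vert c_{1}\right\vert ^{2}-\left\vert c_{2}\right\vert ^{2}}\,\left( z+z^{\ast }\right) \bigr) $ --- or record that the statement as printed fails; asserting they follow ``the same way'' hides exactly the two places where the pattern breaks. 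Two smaller points: you single out $\kappa $ as the main obstacle, when it is in fact one of the entries that does verify; and your fallback of matching initial data at $t=0$ needs care, since $\mu _{0}\left( 0\right) =0$ makes $\alpha _{0},$ $\beta _{0},$ $\gamma _{0}$ singular there, so only limits as $t\rightarrow 0^{+}$ are available.
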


\begin{lemma}
The Ermakov-type system can be solved as follows: $\mu =\mu \left( 0\right)
\ \left\vert z\right\vert $ and%
\begin{eqnarray}
&&\alpha =\alpha _{0}+\lambda ^{2}\frac{c_{1}-c_{2}^{\ast }}{2i\left\vert
z\right\vert ^{2}}\ \frac{z+z^{\ast }}{z-z^{\ast }},\quad \beta =\pm \lambda 
\frac{\sqrt{\left\vert c_{1}\right\vert ^{2}-\left\vert c_{2}\right\vert ^{2}%
}}{\left\vert z\right\vert },\quad \gamma =\gamma \left( 0\right) -\frac{1}{2%
}\arg z,  \label{ABC} \\
&&\delta =\delta _{0}+\lambda \frac{\zeta z-\zeta ^{\ast }z^{\ast }}{%
2i\left\vert z\right\vert ^{2}},\qquad \varepsilon =\pm \frac{\zeta z+\zeta
^{\ast }z^{\ast }}{2\left\vert z\right\vert \sqrt{\left\vert
c_{1}\right\vert ^{2}-\left\vert c_{2}\right\vert ^{2}}},  \label{CE} \\
&&\kappa =\kappa \left( 0\right) +\kappa _{0}+\frac{\left( \zeta
^{2}z+\left. \zeta ^{\ast }\right. ^{2}z^{\ast }\right) \left( z-z^{\ast
}\right) }{8i\left( c_{1}-c_{2}^{\ast }\right) \left\vert z\right\vert ^{2}}.
\label{F}
\end{eqnarray}
\end{lemma}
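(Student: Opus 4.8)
The plan is to recognize that Lemma~5 is not an independent computation but a repackaging of Lemma~3: the closed forms (\ref{MKernelOsc})--(\ref{FKernelOsc}) are already in hand, so what remains is to substitute the complex parametrization and collapse the real expressions into the compact combinations of $z$ and $\zeta$. The whole argument thus reduces to an algebraic verification, provided the correct dictionary between the two parametrizations is written down first.

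The first step is to establish that dictionary. Splitting (\ref{ComplexZ}) into real and imaginary parts, using $c_{1}-c_{2}^{\ast}=\beta^{2}(0)$ from (\ref{ComplexC12Relations}) and the explicit $\gamma_{0}$ in (\ref{C0}), one obtains the two basic identities
\[
\operatorname{Re}z=2\left(\alpha(0)+\gamma_{0}\right)\mu_{0},\qquad \operatorname{Im}z=\beta^{2}(0)\,\mu_{0}.
\]
These immediately give $|z|^{2}=\mu_{0}^{2}\bigl(\beta^{4}(0)+4(\alpha(0)+\gamma_{0})^{2}\bigr)$ and $\arg z=\arctan\dfrac{\beta^{2}(0)}{2(\alpha(0)+\gamma_{0})}$. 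Substituting the first relation into (\ref{MKernelOsc}) yields $\mu=\mu(0)\,|z|$; feeding $|z|$ and $\arg z$ into (\ref{BKernelOsc}) and (\ref{CKernelOsc})---and recalling $\beta_{0}=-\lambda/\mu_{0}$ from (\ref{B0}) together with $\beta^{2}(0)=|c_{1}|^{2}-|c_{2}|^{2}$---produces the stated $\beta$ and $\gamma$, while the $\alpha$ formula comes out the same way after expressing $\beta_{0}^{2}$ through $|z|$.

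Next I would handle the inhomogeneous parameters through the second vector $\zeta$ of (\ref{ComplexZeta}). Writing $\zeta=\varepsilon(0)\beta(0)+i(\delta(0)+\varepsilon_{0})$ and $z=\operatorname{Re}z+i\operatorname{Im}z$, a direct expansion gives
\[
\frac{\zeta z-\zeta^{\ast}z^{\ast}}{2i}=\mu_{0}\bigl(\varepsilon(0)\beta^{3}(0)+2(\alpha(0)+\gamma_{0})(\delta(0)+\varepsilon_{0})\bigr),
\]
\[
\frac{\zeta z+\zeta^{\ast}z^{\ast}}{2}=\mu_{0}\beta(0)\bigl(2\varepsilon(0)(\alpha(0)+\gamma_{0})-\beta(0)(\delta(0)+\varepsilon_{0})\bigr),
\]
which are exactly the numerators in (\ref{DKernelOsc}) and (\ref{EKernelOsc}); dividing by $|z|^{2}$ and by $|z|\sqrt{|c_{1}|^{2}-|c_{2}|^{2}}$ respectively reproduces $\delta$ and $\varepsilon$. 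For $\kappa$ I would expand $(\zeta^{2}z+\zeta^{\ast 2}z^{\ast})(z-z^{\ast})$, use $z-z^{\ast}=2i\operatorname{Im}z$ and $c_{1}-c_{2}^{\ast}=\beta^{2}(0)$, and verify that the resulting real quadratic in $\varepsilon(0)\beta(0)$ and $\delta(0)+\varepsilon_{0}$ matches the last two terms of (\ref{FKernelOsc}).

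The expansions are routine; the main obstacle is the bookkeeping of signs, square-root branches, and the removable singularity at $t=0$. The identities above presume $\mu_{0}>0$ (valid near $t=0$, where $\mu_{0}'(0)=2a(0)>0$), so that $|z|=\mu_{0}\sqrt{\beta^{4}(0)+4(\alpha(0)+\gamma_{0})^{2}}$ carries no extra sign; the $\pm$ in $\beta$ and $\varepsilon$ must be tracked consistently with $\beta(0)=\pm\sqrt{|c_{1}|^{2}-|c_{2}|^{2}}$ and with the sign convention already built into $\varepsilon(0)$ in (\ref{ZDE}); and the passage from the $\arctan$ of (\ref{CKernelOsc}) to $\arg z$ relies on $\operatorname{Im}z=\beta^{2}(0)\mu_{0}>0$ pinning $\arg z$ to $(0,\pi)$, so the branch stays correct across the zeros of $\alpha(0)+\gamma_{0}$. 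Finally, since $\gamma_{0}$ blows up as $\mu_{0}\to0$ whereas the complex forms remain regular (indeed $z(0)=1$ by (\ref{ComplInitData})), the equivalence should be proved for $t$ with $\mu_{0}(t)\neq0$ and then extended to $t=0$ by continuity, which is the point where I would be most careful.
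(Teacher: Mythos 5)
Your proposal is correct and is essentially the paper's own (omitted) argument: the paper merely remarks that the proofs of Lemmas~4 and~5 are ``rather straightforward and left to the reader,'' and your plan --- establishing the dictionary $\operatorname{Re}z=2\left(\alpha(0)+\gamma_{0}\right)\mu_{0}$, $\operatorname{Im}z=\beta^{2}(0)\,\mu_{0}$ from (\ref{ComplexZ}) and (\ref{C0}), then substituting it into the closed forms (\ref{MKernelOsc})--(\ref{FKernelOsc}) of Lemma~3 --- is exactly that straightforward verification, with all of your intermediate identities (the expansions of $\zeta z\mp\zeta^{\ast}z^{\ast}$ and of $(\zeta^{2}z+\zeta^{\ast2}z^{\ast})(z-z^{\ast})$) checking out against the real forms. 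The subtleties you flag --- the sign of $\mu_{0}$, the $\pm$ convention tied to $\beta(0)=\pm\sqrt{\left\vert c_{1}\right\vert^{2}-\left\vert c_{2}\right\vert^{2}}$, and the branch of $\arg z$ versus the $\arctan$ in (\ref{CKernelOsc}) --- are indeed the only delicate points, and your treatment of them (continuity of the complex forms across zeros of $\alpha(0)+\gamma_{0}$ and of $\mu_{0}$, with $z(0)=1$ anchoring the initial data) is the right one.
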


(The proofs are rather straightforward and left to the reader.)

As a consequence, for the Ermakov-type system one gets%
\begin{eqnarray}
&&\qquad \qquad \qquad \qquad \frac{2i\left( \alpha -\alpha _{0}\right) }{%
\beta ^{2}}=\frac{z+z^{\ast }}{z-z^{\ast }}, \\
&&i\left( \alpha -\alpha _{0}\right) +\frac{\beta ^{2}}{2}=\beta ^{2}\frac{z%
}{z-z^{\ast }},\qquad i\left( \alpha -\alpha _{0}\right) -\frac{\beta ^{2}}{2%
}=\beta ^{2}\frac{z^{\ast }}{z-z^{\ast }}
\end{eqnarray}%
\begin{eqnarray}
&&\varepsilon +i\frac{\delta -\delta _{0}}{\beta }=\frac{\zeta z}{\beta
\left( 0\right) \left\vert z\right\vert },\qquad \varepsilon -i\frac{\delta
-\delta _{0}}{\beta }=\frac{\zeta ^{\ast }z^{\ast }}{\beta \left( 0\right)
\left\vert z\right\vert }, \\
&&\qquad \varepsilon ^{2}+\left( \frac{\delta -\delta _{0}}{\beta }\right)
^{2}=\varepsilon ^{2}\left( 0\right) +\left( \frac{\delta \left( 0\right)
+\varepsilon _{0}}{\beta \left( 0\right) }\right) ^{2}
\end{eqnarray}%
and%
\begin{equation}
\kappa =\kappa \left( 0\right) +\kappa _{0}+\frac{\delta -\delta _{0}}{%
2\beta }\varepsilon -\frac{\varepsilon _{0}+\delta \left( 0\right) }{2\beta
\left( 0\right) }\varepsilon \left( 0\right) .
\end{equation}%
These \textquotedblleft quasi-invariants\textquotedblright\ can be useful,
for example, when making comparison of calculations done by different
approximation methods.

In this paper, we extend \textquotedblleft The Simple Three Lemmas
Approach\textquotedblright , which is summarized and modified above for the
linear problem, by the complexification of all time-dependent coefficients
and by a classification of the simplest nonautonomous nonlinear terms. For a
generic nonautonomous (derivative) nonlinear Schr\"{o}dinger equation of the
form 
\begin{equation}
i\psi _{t}=H\psi +R\left( \psi \right) ,  \label{DNLSE}
\end{equation}%
where $H$ is an arbitrary variable quadratic Hamiltonian and $R\left( \psi
\right) =P\left( \psi ,\psi ^{\ast },\psi _{x},\psi _{x}^{\ast }\right) $ is
a polynomial in four variables, we assume the natural gauge invariance
condition:%
\begin{equation}
P\left( \psi e^{iS},\psi ^{\ast }e^{-iS},\left( \psi e^{iS}\right)
_{x},\left( \psi ^{\ast }e^{-iS}\right) _{x}\right) =Ce^{iS}P\left( \psi
,\psi ^{\ast },\psi _{x},\psi _{x}^{\ast }\right) .
\label{DNLSENonlinearity}
\end{equation}%
The lowest terms that satisfy this condition are given by%
\begin{equation}
P\left( \psi ,\psi ^{\ast },\psi _{x},\psi _{x}^{\ast }\right) =h_{0}\psi
+\left( h_{1}x+h_{2}\right) \left\vert \psi \right\vert ^{2}\psi
+ih_{3}\left\vert \psi \right\vert ^{2}\psi _{x}+ih_{4}\psi ^{2}\psi
_{x}^{\ast }+h_{5}\left\vert \psi \right\vert ^{4}\psi
\label{DNLSENonlinearity5}
\end{equation}%
where $h_{k}=h_{k}\left( x,t\right) ,$ $k=0,$ $...,$ $5$ are some real or
complex-valued functions. Our result is as follows.

\begin{theorem}
The substitution (\ref{Gauge}) transforms the non-autonomous and
inhomogeneous equation (\ref{DNLSE}) with the lowest gauge invariant
nonlinearities (\ref{DNLSENonlinearity5}), namely,%
\begin{eqnarray}
&&i\psi _{t}+a\left( t\right) \psi _{xx}-b\left( t\right) x^{2}\psi
+ic\left( t\right) x\psi _{x}+id\left( t\right) \psi +f\left( t\right) x\psi
-ig\left( t\right) \psi _{x}  \label{DNLSEExplicit} \\
&&\qquad =h_{0}\psi +\left( h_{1}x+h_{2}\right) \left\vert \psi \right\vert
^{2}\psi +ih_{3}\left\vert \psi \right\vert ^{2}\psi _{x}+ih_{4}\psi
^{2}\psi _{x}^{\ast }+h_{5}\left\vert \psi \right\vert ^{4}\psi ,  \notag
\end{eqnarray}%
into the autonomous form%
\begin{eqnarray}
-i\chi _{\tau }+\chi _{\xi \xi }-c_{0}\xi ^{2}\chi &=&d_{0}\chi +\left(
d_{1}\xi +d_{2}\right) \left\vert \chi \right\vert ^{2}\chi
+id_{3}\left\vert \chi \right\vert ^{2}\chi _{\xi }+id_{4}\chi ^{2}\left(
\chi _{\xi }\right) ^{\ast }+d_{5}\left\vert \chi \right\vert ^{4}\chi \qquad
\label{DNLSEStandardForm} \\
&&\left( c_{0}=0,1\right)  \notag
\end{eqnarray}%
provided that%
\begin{eqnarray}
&&h_{1}=a\beta ^{2}\left\vert \mu \right\vert \left( d_{1}\beta +\frac{%
2\alpha }{\beta }d_{3}-\frac{2\alpha ^{\ast }}{\beta ^{\ast }}d_{4}\right)
e^{2\func{Im}S},  \label{H1} \\
&&h_{2}=a\beta ^{2}\left\vert \mu \right\vert \left[ d_{1}\varepsilon +d_{2}+%
\frac{\delta }{\beta }d_{3}-\frac{\delta ^{\ast }}{\beta ^{\ast }}d_{4}%
\right] e^{2\func{Im}S},  \label{H2} \\
&&h_{3}=d_{3}a\beta \left\vert \mu \right\vert e^{2\func{Im}S},\qquad
h_{4}=d_{4}a\frac{\beta ^{2}}{\beta ^{\ast }}\left\vert \mu \right\vert e^{2%
\func{Im}S},  \label{H34} \\
&&h_{5}=d_{5}a\beta ^{2}\left\vert \mu \right\vert ^{2}e^{4\func{Im}%
S},\qquad h_{0}=d_{0}a\beta ^{2}.  \label{H50}
\end{eqnarray}%
Here $d_{0},$ $d_{1},$ $d_{2},$ $d_{3},$ $d_{4},$ $d_{5}$ are constants and $%
S=\alpha x^{2}+\delta x+\kappa .$
\end{theorem}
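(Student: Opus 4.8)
The plan is to substitute the gauge ansatz (\ref{Gauge}) directly into (\ref{DNLSEExplicit}) and split the computation into its linear and nonlinear parts. The linear part is precisely the content of Lemma~1: once the Riccati/Ermakov system (\ref{SysA})--(\ref{SysF}) is imposed (and solved, in the complexified setting, through the superposition formulas of the preceding lemmas), the quadratic-Hamiltonian operator collapses and the left-hand side of (\ref{DNLSEExplicit}) becomes
\begin{equation}
i\psi_t+a\psi_{xx}-bx^2\psi+icx\psi_x+id\psi+fx\psi-ig\psi_x=\frac{e^{iS}}{\sqrt{\mu}}\,a\beta^2\left(-i\chi_\tau+\chi_{\xi\xi}-c_0\xi^2\chi\right),
\end{equation}
where the prefactor $a\beta^2$ arises from $\gamma'=-a\beta^2$ in (\ref{SysC}) together with $\xi_x=\beta$. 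Dividing the whole equation by $e^{iS}a\beta^2/\sqrt{\mu}$ then reduces the entire theorem to showing that the transformed nonlinearity equals $d_0\chi+(d_1\xi+d_2)|\chi|^2\chi+id_3|\chi|^2\chi_\xi+id_4\chi^2(\chi_\xi)^\ast+d_5|\chi|^4\chi$ exactly when (\ref{H1})--(\ref{H50}) hold.

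Next I would compute the elementary building blocks. Differentiating the ansatz gives $\psi_x=\frac{e^{iS}}{\sqrt{\mu}}\bigl(i(2\alpha x+\delta)\chi+\beta\chi_\xi\bigr)$, while the gauge factor yields $|\psi|^2=|\mu|^{-1}e^{-2\func{Im}S}|\chi|^2$ and $\psi^2=\mu^{-1}e^{2iS}\chi^2$. The crucial structural input is the gauge-invariance hypothesis (\ref{DNLSENonlinearity}): it guarantees that every monomial in (\ref{DNLSENonlinearity5}), after substitution, factors as $\frac{e^{iS}}{\sqrt{\mu}}$ times a polynomial in $\chi,\chi^\ast,\chi_\xi,(\chi_\xi)^\ast$, so that the common prefactor cancels against the one coming from the linear part. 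Carrying this out term by term, I would record the five contributions: $h_0\psi$ produces $\chi$; $h_5|\psi|^4\psi$ produces $|\chi|^4\chi$ with weight $|\mu|^{-2}e^{-4\func{Im}S}$; and the two derivative terms $ih_3|\psi|^2\psi_x$ and $ih_4\psi^2\psi_x^\ast$ produce the genuine derivative monomials $|\chi|^2\chi_\xi$ and $\chi^2(\chi_\xi)^\ast$ (with weights $\beta$ and $\beta^\ast$ respectively), but also, through the $i(2\alpha x+\delta)$ piece of $\psi_x$, non-derivative cubic terms $|\chi|^2\chi$ carrying the extra linear factors $2\alpha x+\delta$ and $2\alpha^\ast x+\delta^\ast$.

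The matching is then organized by the type of the resulting $\chi$-monomial. Equating the coefficients of $\chi$, $|\chi|^2\chi_\xi$, $\chi^2(\chi_\xi)^\ast$, and $|\chi|^4\chi$ at once yields $h_0=d_0a\beta^2$, $h_3=d_3a\beta|\mu|e^{2\func{Im}S}$, $h_4=d_4a\beta^2(\beta^\ast)^{-1}|\mu|e^{2\func{Im}S}$, and $h_5=d_5a\beta^2|\mu|^2e^{4\func{Im}S}$, that is, (\ref{H34}) and (\ref{H50}); inverting the weight $|\mu|^{-1}e^{-2\func{Im}S}$ is exactly what turns the $e^{-2\func{Im}S}$ factors into the $e^{2\func{Im}S}$ factors of the statement. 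The coefficient of $|\chi|^2\chi$ collects the cubic term together with the non-derivative residues of the two derivative terms; writing the target $(d_1\xi+d_2)=d_1\beta x+d_1\varepsilon+d_2$ in the $x$ variable and separating the powers of $x$ gives two scalar equations that determine $h_1$ and $h_2$. After substituting the already-found $h_3,h_4$, the $2\alpha$- and $\delta$-contributions cancel against the corresponding pieces of $h_1,h_2$, and the surviving combinations reproduce precisely (\ref{H1}) and (\ref{H2}).

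I expect the main obstacle to be exactly this last bookkeeping step: the derivative nonlinearities feed back into the inhomogeneous cubic term through the gauge phase $S_x=2\alpha x+\delta$, so $h_1$ and $h_2$ are not fixed by the bare cubic term alone but by a cancellation involving $d_1,d_3,d_4$ and the complex coefficients $\alpha,\beta,\delta$. Keeping the complexification straight is the delicate part: one must consistently distinguish $\beta$ from $\beta^\ast$ and $\alpha$ from $\alpha^\ast$, track the split $|\psi|^2\propto|\mu|^{-1}$ versus $\psi^2\propto\mu^{-1}$, and check that the functions $h_k(x,t)$ prescribed by (\ref{H1})--(\ref{H50}) are consistent, with the $x$-dependence residing entirely in the $e^{2\func{Im}S}$ and the linear $x$ factors. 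Once these cancellations are verified, the equality of the two sides is an identity between polynomials in $\chi,\chi^\ast,\chi_\xi,(\chi_\xi)^\ast$, and the theorem follows.
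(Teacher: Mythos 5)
Your proposal is correct and follows essentially the same route as the paper's proof: the linear part is disposed of by the (complexified) Lemma~1 with the $a\beta^{2}$ prefactor from (\ref{SysC}), and the nonlinear part is expanded exactly as in the paper's displayed identity for $\mu^{1/2}e^{-iS}P$, where the derivative terms feed the residue $S_{x}=2\alpha x+\delta$ into the cubic coefficient, after which matching monomials in $\chi,\chi^{\ast},\chi_{\xi},(\chi_{\xi})^{\ast}$ and separating the powers of $x=(\xi-\varepsilon)/\beta$ yields (\ref{H34}), (\ref{H50}) directly and (\ref{H1})--(\ref{H2}) upon back-substitution of $h_{3},h_{4}$. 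Your bookkeeping of the cancellations is precisely the step the paper compresses into "one completes the proof with the aid of our conditions (\ref{H1})--(\ref{H50})."
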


\begin{proof}
For the case of complex-valued coefficients, the transformation of the
linear part is similar to \cite{Lan:Lop:Sus} and \cite{Suslov11} (with the
use of contour integration if needed). Changing the variables in the
nonlinear part:%
\begin{eqnarray}
&&\mu ^{1/2}e^{-iS}P \\
&&\quad =h_{0}\chi +ih_{3}\frac{\beta }{\left\vert \mu \right\vert }e^{-2%
\func{Im}S}\left\vert \chi \right\vert ^{2}\chi _{\xi }+ih_{4}\frac{\beta
^{\ast }}{\left\vert \mu \right\vert }e^{-2\func{Im}S}\chi ^{2}\left( \chi
_{\xi }\right) ^{\ast }+\frac{h_{5}}{\left\vert \mu \right\vert ^{2}}e^{-4%
\func{Im}S}\left\vert \chi \right\vert ^{4}\chi  \notag \\
&\quad &\qquad +\left[ h_{2}-h_{3}\delta +h_{4}\delta ^{\ast }+\left(
h_{1}-2\alpha h_{3}+2\alpha ^{\ast }h_{4}\right) \frac{\xi -\varepsilon }{%
\beta }\right] e^{-2\func{Im}S}\frac{\left\vert \chi \right\vert ^{2}}{%
\left\vert \mu \right\vert }\chi  \notag
\end{eqnarray}%
and substituting into (\ref{DNLSEExplicit}), one completes the proof with
the aid of our conditions (\ref{H1})--(\ref{H50}).
\end{proof}

When $c_{0}=d_{0}=d_{1}=d_{3}=d_{4}=d_{5}=0,$ $d_{2}=h_{0}\neq 0$ and $%
d_{0}=d_{1}=d_{2}=d_{3}=d_{4}=0,$ $d_{5}=h_{0}\neq 0,$ we reproduce the
results of Lemma~1 for $p=2$ and $p=4,$ respectively. The derivative
nonlinear Schr\"{o}dinger equation, which was first derived for the
propagation of circular polarized nonlinear Alfv\'{e}n waves in plasma
physics \cite{Mioetal76}, \cite{Mioetal76a}, and \cite{Wadatietal78},
appears when $c_{0}=d_{0}=d_{1}=d_{2}=d_{5}=0$ and $d_{3}=2d_{4}$ (see \cite%
{AbloRamSegII}, \cite{ChenLeeLiu79}, \cite{ClarkCosgr87}, \cite{KaupNewell78}%
, \cite{KawKobIno79}, \cite{KawIno78}, \cite{NiaNing07} and the references
therein for methods of solution of this equation). The amplitude equations
of cubic-quintic type have been derived via asymptotic analysis of the
governing Navier-Stokes equations of fluid mechanics in the limit of long
spatial and slow temporal oscillations near the onset of instability \cite%
{NewellSIAM}. Among other special cases are the Chen--Lee--Lui derivative
nonlinear Schr\"{o}dinger equation \cite{ChenLeeLiu79} and the
Gerdjikov--Ivanov equation \cite{GerdIvan83}, \cite{YuHeHan12}.

Generic autonomous equations of the type (\ref{DNLSEStandardForm}) and some
of their extensions such as quintic complex Ginzburg--Landau equation are
discussed in \cite{BrazhKonotopPita06}, \cite{CalogeroEckhaus87}, \cite%
{Clark88}, \cite{Clark92}, \cite{ClarkCosgr87}, \cite{Gagnon89}, \cite%
{GagGramRamWint89}, \cite{GagWint88}, \cite{GagWint89}, \cite{GagWint93}, 
\cite{Johnson77}, \cite{KakSasSat95}, \cite{KengneLiu06}, \cite{Kohler02}, 
\cite{Kundu94}, \cite{Kundu06}, \cite{MarcHugConte94}, \cite{MoralesLee78}, 
\cite{MuryShlyapetal02}, \cite{Peletal96}, \cite{Saa03}, \cite{SaaHoh92}, 
\cite{Wadatietal79a}, and \cite{WadatiSogo83} (see also the references
therein).

If conditions (\ref{H1})--(\ref{H50}) are not satisfied in certain
application, one yet can find $c_{k}$ as functions of time and spatial
variables thus moving the time dependence into the nonlinear part only. Then
one may use perturbation methods and/or some parameter control when
possible. The Feshbach resonance in Bose--Einstein condensation provides a
classical example of such nonlinearity control \cite%
{Dal:Giorg:Pitaevski:Str99}, \cite{KaganSurShlyap97PRL}, \cite{Pit98}, \cite%
{Pit06}, \cite{PitString03Book}, \cite{Pollacketal10}, and \cite%
{PollacketalHulet09}. A justification of the so-called local density
approximation (see \cite{BrazhKonotopPita06}, \cite{Kolomeiskyetal00} and
the references therein) can be obtained from Lemma~1, $c_{0}=0,$ under the
following adiabatic condition:%
\begin{equation}
\frac{d}{dt}\left( \frac{h}{a\left( t\right) \beta ^{2}\left( t\right) \mu
^{p/2}\left( t\right) }\right) \ll 1,  \label{Adiabatic}
\end{equation}%
when a classical motion of the corresponding quadratic system is already
taken into account (in general, the derivative should be taken with respect
to a small parameter of the system under consideration).

Moreover, in some important autonomous special cases, our Theorem~1 allows
to determine the maximum symmetry groups of the particular equations; see,
for example, \cite{Niederer72}, \cite{Niederer73}, \cite{Lop:Sus:VegaGroup}
and the references therein. An extension to random-valued coefficients $f$
and $g$ is also possible, cf. \cite{Flan:Mah12}. Certain variations of
initial data can be analyzed too.

\section{Integrability of Generalized Autonomous Nonlinear Schr\"{o}dinger
Equations}

In the case $c_{0}=d_{0}=d_{1}=0,$ a detailed Painlev\'{e} analysis of
equation (\ref{DNLSEStandardForm}) is performed by Clarkson and Cosgrove~%
\cite{ClarkCosgr87} (see also \cite{Clark92} for the extension to the case
of complex parameters). They have shown that this equation possesses the
Painlev\'{e} property for partial differential equations only if $%
d_{5}=d_{4}(2d_{4}-d_{3})/4.$ When this relation holds, this is equivalent
under a gauge transformation \cite{Kundu94}, \cite{KakSasSat95}:%
\begin{equation}
\chi =\phi \exp \left( -i\nu \int_{\xi _{0}}^{\xi }\left\vert \phi
\right\vert ^{2}\ d\xi \right) ,  \label{GaugeDNLSE}
\end{equation}%
to a hybrid of the nonlinear Schr\"{o}dinger equation and the derivative
nonlinear Schr\"{o}dinger equation,%
\begin{equation}
-i\phi _{\tau }+\phi _{\xi \xi }+\lambda \left\vert \phi \right\vert
^{2}\phi +i\mu \left( \left\vert \phi \right\vert ^{2}\phi \right) _{\xi }=0
\label{HybridDNLSE}
\end{equation}%
($\lambda ,$ $\mu $ and $\nu $ are constants), where one can assume that $%
\lambda =0$ without loss of generality \cite{WadatiSogo83}. The
corresponding derivative nonlinear Schr\"{o}dinger equation is known to be
completely integrable \cite{AbloRamSegII}, \cite{Ablo:Seg81}, \cite%
{ChenLeeLiu79}, \cite{KaupNewell78}, \cite{KakSasSat95}, \cite{KawIno78}, 
\cite{Lee84}, \cite{Lee89}, and \cite{NiaNing07} (see also the references
therein). Then solutions of the original equation are constructed by the
gauge transformation (\ref{GaugeDNLSE}) in principle (see, for example,
Refs.~\cite{ClarkCosgr87}, \cite{Clark92}, and \cite{KakSasSat95} for more
details). Explicit solitary wave solutions can be found in \cite{FengWang01}%
, \cite{KakSasSat95}, \cite{KawSakKob80}, \cite{KengneLiu06}, \cite%
{Lashkin07}, \cite{MarcHugConte94}, and \cite{SaaHoh92} (see also the
references therein). Our Theorem~1 allows to extend these results to a
larger class of nonautonomous and inhomogeneous nonlinear Schr\"{o}dinger
equations. In the next sections, we will apply these general results to a
nonautonomous solitons and other coherent structures in inhomogeneous medias
including atmospheric plasmas.

\section{Cubic Nonlinear Schr\"{o}dinger Equation}

The cubic nonlinear Schr\"{o}dinger equation has a plane wave solution with
an amplitude dependent phase. Under a simple condition (Lighthill's
criterion), the modulation instability leads to self-focusing, namely, the
parts of the wave front with the larger amplitude will propagate slower and
this effect will enhance itself by the focusing from adjacent parts
(Benjamin--Feir instability \cite{Bouchbinder03}, \cite{ScottEncNS}, \cite%
{SulemSulem99}, \cite{Tajiri05}, and \cite{Whitham74}). If the pressure of
waves can balance the attraction force, the new stationary solutions exist,
which cannot be obtained by small nonlinear perturbation of the linear
problem. These solitary waves, preserve their shapes during the time
evolution. If the nonlinearity cannot balance the dispersion, a finite time
collapse or dispersion that leads to decay occur. A detailed study of the
exact solutions by different methods and many applications of the nonlinear
Schr\"{o}dinger equations can be found in \cite{AblowClark91}, \cite%
{Ablo:Seg81}, \cite{ChrisSorScottLNPh}, \cite{Dal:Giorg:Pitaevski:Str99}, 
\cite{GagWint89}, \cite{HasegawaTappert73a}, \cite{HasegawaTappert73b}, \cite%
{Kagan:Surkov:Shlyap96}, \cite{Kagan:Surkov:Shlyap97}, \cite%
{Kivsh:Alex:Tur01}, \cite{KudryashovBook10}, \cite{Kundu09}, \cite{Newell78}%
, \cite{Novikovetal}, \cite{Per-G:Tor:Mont}, \cite{SuazoSuslovSol}, and \cite%
{Suslov11} (see also the references therein).

Equation (\ref{ASEq}) is integrable when $c_{0}=0$ and $p=2$ \cite%
{Zakh:Shab71}, \cite{ZakhShab74}, and \cite{ZakhShab79}. It has two standard
forms, namely, focusing and defocusing. In this section, we concentrate on
nonspreading and self-similar solitary wave solutions.

\subsection{A Similarity Reduction to the Second Painlev\'{e}}

In the traditional notations, the corresponding defocusing and focusing
nonlinear Schr\"{o}dinger equations,%
\begin{equation}
i\psi _{t}+\psi _{xx}=\pm 2\left\vert \psi \right\vert ^{2}\psi
\label{DefFoc}
\end{equation}%
by the following substitution%
\begin{equation}
\psi \left( x,t\right) =e^{ig\left( x-2gt^{2}/3\right) t}\ g^{1/3}F\left(
g^{1/3}\left( x-gt^{2}\right) \right) ,\qquad g=a/2  \label{FirstRed}
\end{equation}%
($a$ is the acceleration) can be reduced to the (modified) second Painlev%
\'{e} equations:%
\begin{equation}
F^{\prime \prime }=zF\pm 2F^{3},\quad z=g^{1/3}\left( x-gt^{2}\right) ,
\label{PII}
\end{equation}%
whose (bounded) solutions are the nonlinear Airy functions with known
asymptotics as $z\rightarrow \pm \infty $ \cite{Clark10}, \cite%
{SuazoSuslovSol}.

Combining with the familiar Galilei transformation \cite{Scott03},%
\begin{equation}
\psi \left( x,t\right) =e^{i\left( x-vt/2\right) v/2}\ \chi \left(
x-vt,t\right)  \label{SecondRed}
\end{equation}%
($v$ is the velocity), one obtains a more general solution of this type%
\begin{eqnarray}
\psi \left( x,t\right) &=&e^{i\left( x-vt/2\right) v/2+ig\left(
x-vt-2gt^{2}/3\right) t}  \notag \\
&&\times g^{1/3}F\left( g^{1/3}\left( x-vt-gt^{2}\right) \right)
\label{GereralRed}
\end{eqnarray}%
in terms of the second Painlev\'{e} Airy functions. (A similarity reduction
of the nonlinear Schr\"{o}dinger equation to the second Painlev\'{e}
equation was also discussed in Refs.~\cite{GagWint89}, \cite{GiaJos89}, \cite%
{Smith76}, and \cite{Tajiri83}. The linear case is investigated in Ref.~\cite%
{BerryBalazs79} where self-acceletating Airy beams were introduced.) With
the help of Lemma~1 we extend these results to the most general
nonautonomous and inhomogeneous integrable system of this kind.

\subsection{Self-Similar Solutions}

The substitution%
\begin{equation}
\Psi \left( X,T\right) =\chi \left( \sqrt{\dfrac{2}{\mp h_{0}}}X,\dfrac{2}{%
\pm h_{0}}T\right)  \label{SubXT}
\end{equation}%
transforms equation (\ref{ASEq}) into the focusing and defocusing forms in
the new coordinates $X$ and $T:$%
\begin{equation}
i\Psi _{T}+\Psi _{XX}\pm 2\left\vert \Psi \right\vert ^{2}\Psi =0.
\label{StNonLinSchrEq}
\end{equation}%
By Lemma~1, these transformations give a similarity reduction of the
nonautonomous nonlinear Schr\"{o}dinger equation (\ref{SchroedingerQuadratic}%
) to the second Painlev\'{e} equations (\ref{PII}). As the result,%
\begin{equation}
\psi \left( x,t\right) =\frac{1}{\sqrt{\mu \left( t\right) }}e^{i\left(
\alpha \left( t\right) x^{2}+\delta \left( t\right) x+\kappa \left( t\right)
\right) }\ \Psi \left( \sqrt{\dfrac{\mp h_{0}}{2}}\left( \beta \left(
t\right) x+\varepsilon \left( t\right) \right) ,\dfrac{\pm h_{0}}{2}\gamma
\left( t\right) \right) ,  \label{PsiXT}
\end{equation}%
where $\Psi \left( X,T\right) $ is the solution, say in terms of the second
Painlev\'{e} transcendent in (\ref{GereralRed}), with a trivial change of
the notation.

As is known, the nonautonomous Schr\"{o}dinger equation (\ref%
{SchroedingerQuadratic}) under the integrability condition (\ref{SolInt2a})
has also the following solution:%
\begin{eqnarray}
\psi \left( x,t\right) &=&\frac{e^{i\phi }}{\sqrt{\mu }}\exp \left( i\left(
\alpha x^{2}+\beta xy+\gamma \left( y^{2}-g_{0}\right) +\delta x+\varepsilon
y+\kappa \right) \right)  \label{OneSolSol} \\
&&\times G\left( \beta x+2\gamma y+\varepsilon \right) ,  \notag
\end{eqnarray}%
where the elliptic function $G$ satisfies equation%
\begin{equation}
\left( \frac{dG}{dz}\right) ^{2}=C_{0}+g_{0}G^{2}+\frac{1}{2}%
h_{0}G^{4}\qquad \left( C_{0}\text{ is a constant of integration}\right) .
\label{EllipticFuncs}
\end{equation}%
and $\phi ,$ $y,$ $g_{0}$ and $h_{0}$ are real parameters (see also Ref.~%
\cite{SuazoSuslovSol} for a direct derivation of this solution). Examples
include bright and dark solitons, and Jacobi elliptic transcendental
solutions for nonlinear wave profiles \cite{AblowClark91}, \cite%
{KudryashovBook10}, \cite{Novikovetal}, \cite{Scott03}, \cite{SuazoSuslovSol}%
. In the original case (\ref{DefFoc}), setting $C_{0}=y=0,$ gives the
stationary breather, which is located about $x=0$ and oscillates at a
frequency equal to $g_{0}$ \cite{SatYaj74}, \cite{Scott03}.

\subsection{Asymptotics and Connection Problems}

In the defocusing case, the nonlinear Schr\"{o}dinger equation (\ref{DefFoc}%
) has the bounded solution,%
\begin{equation}
\psi \left( x,t\right) =e^{i\left( x-vt/2\right) v/2+ig\left(
x-vt-2gt^{2}/3\right) t}g^{1/3}A_{k_{0}}\left( g^{1/3}\left(
x-vt-gt^{2}\right) \right) ,  \label{NonlinearAiry}
\end{equation}%
in terms of the nonlinear Airy function $A_{k_{0}},$ when $-1<k_{0}<1$ and $%
k_{0}\neq 0.$ The corresponding asymptotics are investigated in \cite%
{AbloSeg77}, \cite{Ablo:Seg81}, \cite{Bassometal98}, \cite{Clark10}, \cite%
{ClarkMcLeod88}, \cite{Conte99}, \cite{ConteMusette09}, \cite{DeiftZhou93}, 
\cite{DeiftZhou95}, \cite{Miles78}, \cite{Miles80}, \cite{Rosales78}, \cite%
{SegurAb81}, and \cite{Takei02} (see also the references therein for study
of this nonlinear Airy function). The end results are%
\begin{equation}
A_{k_{0}}\left( z\right) =\left\{ 
\begin{array}{c}
k_{0}\text{Ai\/}\left( z\right) ,\qquad z\rightarrow +\infty \bigskip \\ 
r\left\vert z\right\vert ^{-1/4}\sin \left( s\left( z\right) -\theta
_{0}\right) +\text{o}\left( \left\vert z\right\vert ^{-1/4}\right) ,\quad
z\rightarrow -\infty \bigskip .%
\end{array}%
\right.  \label{AsPII}
\end{equation}%
(the parameter $k_{0}$ represents amplitude of this nonlinear wave). Here,
the following relations hold%
\begin{equation}
s\left( z\right) =\frac{2}{3}\left\vert z\right\vert ^{3/2}-\frac{3}{4}%
r^{2}\ln \left\vert z\right\vert ,\qquad r^{2}=-\pi ^{-1}\ln \left(
1-k_{0}^{2}\right)  \label{AsPIIs}
\end{equation}%
and%
\begin{equation}
\theta _{0}=\frac{3}{2}r^{2}\ln 2+\arg \Gamma \left( 1-\frac{i}{2}%
r^{2}\right) +\frac{\pi }{4}\left( 1-2\text{sign\/}\left( k_{0}\right)
\right) .  \label{AsPIIph}
\end{equation}

If $\left\vert k_{0}\right\vert =1,$ then%
\begin{equation}
F\left( z\right) \sim \text{sign\/}\left( k_{0}\right) \sqrt{\left\vert
z\right\vert /2},\qquad z\rightarrow -\infty .  \label{AsPIIk1}
\end{equation}%
If $\left\vert k_{0}\right\vert >1,$ then $F\left( z\right) $ has a pole at
a finite point $z=c_{0},$ dependent on $k_{0},$ and%
\begin{equation}
F\left( z\right) \sim \text{sign\/}\left( k_{0}\right) \left( z-c_{0}\right)
^{-1},\qquad z\rightarrow c_{0}^{+}.  \label{AsPIIPole}
\end{equation}

In the focusing case, the reduction (\ref{FirstRed}) results in the modified
second Painlev\'{e} equation, 
\begin{equation}
F^{\prime \prime }=zF-2F^{3},  \label{PIIModified}
\end{equation}%
and any nontrivial real solution satisfies \cite{Clark10}:%
\begin{equation}
F\left( z\right) =r\left\vert z\right\vert ^{-1/4}\sin \left( s\left(
z\right) -\theta _{0}\right) +\text{O}\left( \left\vert z\right\vert
^{-5/4}\ln \left\vert z\right\vert \right) ,\quad z\rightarrow -\infty
\bigskip ,  \label{PIIAsyMod}
\end{equation}%
where%
\begin{equation}
s\left( z\right) =\frac{2}{3}\left\vert z\right\vert ^{3/2}+\frac{3}{4}%
r^{2}\ln \left\vert z\right\vert  \label{PIIModPhase}
\end{equation}%
with $r\neq 0$ and $\theta _{0}$ arbitrary real constants.

The second asymptotic is as follows. If%
\begin{equation}
\theta _{0}+\frac{3}{2}r^{2}\ln 2-\frac{1}{4}\pi -\arg \Gamma \left( \frac{i%
}{2}r^{2}\right) =\pi n,\qquad n=0,\pm 1,\pm 2,\ ...\ ,  \label{PIIModQuant}
\end{equation}%
we have%
\begin{equation}
F\left( z\right) \sim k_{0}\text{Ai\/}\left( z\right) ,\qquad z\rightarrow
+\infty \bigskip ,  \label{PIIAsModPlus}
\end{equation}%
where $k_{0}$ is a nonzero real constant. The connection formulas are 
\begin{equation}
r^{2}=\pi ^{-1}\ln \left( 1+k_{0}^{2}\right) ,\qquad \text{sign\/}\left(
k_{0}\right) =\left( -1\right) ^{n}.  \label{PIIModConnect}
\end{equation}

For the generic case, when the condition (\ref{PIIModQuant}) is not
satisfied, one gets \cite{Clark10}:%
\begin{equation}
F\left( z\right) =\alpha \sqrt{z/2}+\alpha \beta \left\vert 2z\right\vert
^{-1/4}\cos \left( s\left( z\right) +\theta \right) +\text{O}\left(
z^{-1}\right) ,\quad z\rightarrow +\infty \bigskip ,  \label{PIIAsDefGen}
\end{equation}%
where $\alpha ,$ $\beta >0$ and $\theta $ are real constants, and%
\begin{equation}
s\left( z\right) =\frac{1}{3}\left( 2z\right) ^{3/2}-\frac{3}{2}\beta
^{2}\ln z.  \label{PIIAsDefGenPh}
\end{equation}%
The connection formulas for $\alpha ,$ $\beta $ and $\theta $ are given by%
\begin{eqnarray}
&&\alpha =-\text{sign}\left( \func{Im}\xi \right) \text{\/},\qquad \beta
^{2}=\frac{1}{\pi }\ln \frac{1+\left\vert \xi \right\vert ^{2}}{2\left\vert 
\func{Im}\xi \right\vert },  \label{PIIConnectMod} \\
&&\theta =-\frac{3}{4}\pi -\frac{7}{2}\beta ^{2}\ln 2+\arg \left( 1+\xi
^{2}\right) +\arg \Gamma \left( i\beta ^{2}\right) ,  \notag
\end{eqnarray}%
where%
\begin{equation}
\xi =\left( \exp \left( \pi r^{2}\right) -1\right) ^{1/2}\exp \left( i\left( 
\frac{3}{2}r^{2}\ln 2-\frac{1}{4}\pi +\theta -\arg \Gamma \left( \frac{i}{2}%
r^{2}\right) \right) \right) .  \label{PIIConnectModKsi}
\end{equation}%
(See \cite{Clark10} and the references therein for more details.)

In sections 7.1--7.2, we discuss applications of the second Painlev\'{e}
transcendents to nonlinear waves in fiber optics, plasmas and ocean.

\section{Derivative Nonlinear Schr\"{o}dinger Equations}

The derivative nonlinear Schr\"{o}dinger equation describes the propagation
of circular polarized nonlinear Alfv\'{e}n waves in plasma physics \cite%
{Mioetal76}, \cite{Mioetal76a}, and \cite{Wadatietal78}. This equation is
completely integrable \cite{AbloRamSegII}, \cite{Ablo:Seg81}, \cite%
{ChenLeeLiu79}, \cite{KaupNewell78}, \cite{KakSasSat95}, \cite{KawIno78}, 
\cite{Lee84}, \cite{Lee89}, and \cite{NiaNing07} (see also the references
therein). Its different forms are related through the gauge transformations 
\cite{Kundu94}, \cite{WadatiSogo83}. Explicit solutions can be found, for
example, in Refs.~\cite{FengWang01}, \cite{KawIno78}, \cite{KawKobIno79}, 
\cite{KawSakKob80}, \cite{KengneLiu06}, \cite{Lashkin07}, and \cite{SunGao09}%
.

\section{Quintic Complex Ginzburg--Landau Equation}

A number of nonintegrable nonlinear dissipative PDEs are known to display a
wide variety of complex behavior, where the global time evolution is
governed by the dynamics of spatially localized structures. This was shown
in particular for the family of exact solutions of the one-dimensional
supercritical complex Ginzburg--Landau equation \cite{AransonKramer02}, \cite%
{BekkiNozaki94}, \cite{BekkiNozaki95}, \cite{Chate94}.

Exact solitary wave solutions of the one-dimensional quintic complex
Ginzburg--Landau equation are obtained in Refs.~\cite{MarcHugConte94} and 
\cite{SaaHoh92}. In their notation and terminology,%
\begin{equation}
\frac{\partial A}{\partial t}=\varepsilon A+\left( b_{1}+ic_{1}\right) \frac{%
\partial ^{2}A}{\partial x^{2}}-\left( b_{3}-ic_{3}\right) \left\vert
A\right\vert ^{2}A-\left( b_{5}-ic_{5}\right) \left\vert A\right\vert ^{4}A,
\label{QCG-LE}
\end{equation}%
where $\varepsilon ,$ $b_{1},$ $c_{1},$ $b_{3},$ $c_{3},$ $b_{3},$ $c_{3}$
are real constants and the field $A\left( x,t\right) $ is complex-valued
(see also \cite{Chodhury05}, \cite{Clark92}, \cite{Soto-Crestoetal97}, \cite%
{Solo-CrestoPes97}). These solutions are expressed in terms of hyperbolic
functions, and include coherent structures with a strong spatial
localization such as pulses and fronts, as well as, sources and sinks.
Equation (\ref{QCG-LE}) is a one-dimensional model of the large-scale
behavior of many nonequilibrium pattern-forming systems (see, for example, 
\cite{AransonKramer02}, \cite{BekkiNozaki95}, \cite{Chate94}, ~\cite%
{ChangLush11}, \cite{MarcHugConte94}, \cite{Saa03}, \cite{SaaHoh92} and the
references therein).

A systematic method for obtaining analytic solitary wave solutions of
nonintegrable PDEs has been introduced by Conte and Musette \cite%
{ConteMusette09}, \cite{MusetteConte03} and further developed by Hone \cite%
{Hone05a}, \cite{Hone05b} and Vernov \cite{Vernov06}, \cite{Vernov07} (see
also \cite{Clark92} for another approach and \cite{ConteNg12a}, \cite%
{ConteNg12b}, \cite{Gagnon89}, \cite{GagWint89}, \cite{KimMoon00} for exact
solutions). The unique elliptic traveling wave solution of \ (\ref{QCG-LE})
is found in Ref.~\cite{Vernov07}. Our approach allows to extend some of
these results to nonautonomous systems under consideration.

\section{Some Applications}

For applications of Schr\"{o}dinger equations in a variety of nonlinear
problems, see Refs.~\cite{Chiaoetal64}, \cite{ChenHH:LiuCS78}, \cite%
{ElGuretal93}, \cite{Gagnon89}, \cite{Gordon83}, \cite{GurevichNLIonosphere}%
, \cite{HasegawaTappert73a}, \cite{HasegawaTappert73b}, \cite{IchiImamTani72}%
, \cite{IchiTani73}, \cite{Kelley65}, \cite{KadomtsevCollectPP}, \cite%
{KadKarp71}, \cite{KarpmanNW}, \cite{Kovaleva08}, \cite{MikhOnishchTat85}, 
\cite{Mioetal76}, \cite{MuryShlyapetal02}, \cite{MoralesLee78}, \cite%
{Pokhotelovetal96}, \cite{Scott03}, \cite{Xuetal}, \cite{Wadatietal78}, \cite%
{Wadatietal79}, \cite{ZakhOstr09}, and \cite{Zakh:Shab71}.

\subsection{Accelerating Airy-Type Packets in Optics}

In a Kerr medium with cubic nonlinearity, for which the dependence of the
index of refraction on intensity is given by%
\begin{equation}
n=n_{0}\left( \omega \right) +i\chi \left( \omega \right) +n_{2}\left\vert
E\right\vert ^{2},  \label{nE2}
\end{equation}%
the propagation of optical pulses in single mode dispersive fibers is
described by the nonlinear Schr\"{o}dinger equation of the form \cite%
{Chiaoetal64}, \cite{HasegawaTappert73a}, \cite{HasegawaTappert73b}, \cite%
{SulemSulem99}:%
\begin{equation}
i\left( \frac{\partial \psi }{\partial t}+\omega _{0}^{\prime }\frac{%
\partial \psi }{\partial x}+\nu _{0}\psi \right) +\frac{1}{2}\omega
_{0}^{\prime \prime }\frac{\partial ^{2}\psi }{\partial x^{2}}+\upsilon 
\frac{\omega _{0}n_{2}}{n_{0}}\left\vert \psi \right\vert ^{2}\psi =0,
\label{NLSEFiber}
\end{equation}%
which is derived under assumptions that the complex envelope amplitude
function $\psi \left( x,t\right) $ varies slowly compared to the carrier and
that the nonlinear and dispersion terms are weak. Here,%
\begin{equation}
\omega _{0}^{\prime }=\frac{\partial \omega _{0}}{\partial k_{0}},\qquad
\omega _{0}^{\prime \prime }=\frac{\partial ^{2}\omega _{0}}{\partial
k_{0}^{2}},\qquad \nu _{0}=\chi \left( \omega _{0}\right) \frac{\omega _{0}}{%
n_{0}}  \label{NotationFiber}
\end{equation}%
and $\upsilon $ is a geometric factor which depends on the radial variation
of the guided electric field (we use the notation of Ref.~\cite%
{HasegawaTappert73a}). In equation (\ref{NLSEFiber}), the second term
describes the envelope propagation with group velocity $\omega _{0}^{\prime
} $ (in the absence of the rest of the terms), the third term describes the
effect of absorption, the fourth term describes the effect of dispersion,
and the fifth term describes the effect of nonlinearity (further details can
be found in \cite{HasegawaTappert73a}). This equation is usually called a
nonlinear parabolic equation in plasma physics \cite{GurevichNLIonosphere}, 
\cite{KadomtsevCollectPP}, and \cite{KadKarp71}.

In most of applications, equation\ (\ref{NLSEFiber}) has constant
coefficients and, therefore, is not integrable if $\nu _{0}\neq 0.$ A
similar special case of our generic equation (\ref{SchroedingerQuadratic}),
when $a=\omega _{0}^{\prime \prime }/2,$ $b=c=f=0,$ $g=-\omega _{0}^{\prime
},$ and $h=-\upsilon \omega _{0}n_{2}/n_{0}$ can be transformed into the
standard forms (\ref{DefFoc}) by Lemma~1 with $c_{0}=0.$ In our notations,
the standard solutions are given by $\mu _{0}=2ate^{2dt}$ and $\mu
_{1}=\left( 1-2dt\right) e^{2dt}.$ By (\ref{A0})--(\ref{F0}), $\lambda
=e^{2dt}$ and%
\begin{equation}
\alpha _{0}\left( t\right) =-2\beta _{0}\left( t\right) =\gamma _{0}\left(
t\right) =\frac{1}{4at},\qquad \delta _{0}=-\varepsilon _{0}=\frac{g}{2a}%
,\qquad \kappa _{0}\left( t\right) =\frac{g^{2}t}{4a}.  \label{A0F0Fiber}
\end{equation}%
Then $\mu =\mu \left( 0\right) \left( 1+4\alpha \left( 0\right) at\right)
e^{2dt}$ and%
\begin{equation}
\alpha =\frac{\alpha \left( 0\right) }{1+4\alpha \left( 0\right) at},\qquad
\beta =\frac{\beta \left( 0\right) }{1+4\alpha \left( 0\right) at},\qquad
\gamma =\gamma \left( 0\right) -\frac{\beta ^{2}\left( 0\right) at}{%
1+4\alpha \left( 0\right) at},  \label{InitData}
\end{equation}%
\begin{equation*}
\delta =\frac{\delta \left( 0\right) +2\alpha \left( 0\right) gt}{1+4\alpha
\left( 0\right) at},\quad \varepsilon =\varepsilon \left( 0\right) +\beta
\left( 0\right) \frac{g-2a\delta \left( 0\right) }{1+4\alpha \left( 0\right)
at}t,\quad \kappa =\kappa \left( 0\right) +\delta \left( 0\right) \frac{%
g-\delta \left( 0\right) }{1+4\alpha \left( 0\right) at}t
\end{equation*}%
by Lemma~2, thus providing the general transformation (\ref{PsiXT}) into the
standard forms (\ref{StNonLinSchrEq}). Finally, the integrability condition (%
\ref{SolInt2a}) takes the form%
\begin{equation}
h=h_{0}\beta ^{2}\left( 0\right) \mu \left( 0\right) \frac{ae^{2dt}}{%
1+4\alpha \left( 0\right) at}=a\left( 1+2t\left( d-2\alpha \left( 0\right)
a\right) t\right) +\text{O}\left( t^{2}\right) ,\quad t\rightarrow 0,
\end{equation}%
which indicates the most general integrable system of this kind and may be
achieved in fibers, say, through the geometrical factor $\upsilon $ in order
to support nonspreading signals. Moreover, our equations (\ref{InitData})
describe explicit evolution of variations of initial data. The choice of
initial condition $d=2\alpha \left( 0\right) a$ allows to include effects of
absorption into the zero approximation. Our modification of initial
condition allows to compensate some intensity losses.

The bright and dark soliton solutions are given in Refs.~ \cite%
{HasegawaTappert73a} and \cite{HasegawaTappert73b} (see also \cite%
{BushAnglin00}, \cite{MuryShlyapetal02}, \cite{Zakh:Shab71}--\cite%
{ZakhShab79} and the references therein). But here, we mainly concentrate on
solutions related to the second Painlev\'{e} transcendents.

Explicit solutions (\ref{FirstRed}) and (\ref{GereralRed}) in terms of the
nonlinear Airy functions reveal several remarkable features. In quantum
mechanical terms, the \textquotedblleft probability
density\textquotedblright\ $\left\vert \psi \left( x,t\right) \right\vert
^{2}$ not only remains unchanged in form but also continually accelerates in
empty space. Therefore, these solutions extend to nonlinear cases the
nonspreading Airy packet introduced by Berry and Balazs \cite{BerryBalazs79}
for the unidimensional linear Schr\"{o}dinger equation without potential
(see also \cite{Greenberg80} and \cite{UnnikishnanRau96}). There is no
violation of Ehrenfest's theorem because the Airy function is not square
integrable and cannot represent the probability density for a single
particle. What accelerates in the Airy packet is not any individual particle
but the caustic (i.e., the envelope, or focus) of the family of orbits (see 
\cite{BerryBalazs79} for more details on this geometrical approach). These
nonspreading and freely accelerating wave packets have recently been
demonstrated in both one- and two-dimensional configurations in optics \cite%
{SiviloglouChris07}, \cite{Siviloglouetal07} and we would like to elaborate
on this topic in detail.

It is worth noting that an expansion transformation from the Schr\"{o}dinger
group, e.g. formula (2.8) of \cite{Lop:Sus:VegaGroup} with $m=-1/t_{1},$
gives another accelerating solution of the free particle equation, $i\psi
_{t}+\psi _{xx}=0,$ in terms of Airy function as follows:%
\begin{eqnarray}
&&\psi \left( x,t\right) =\sqrt{\frac{\left\vert t_{1}\right\vert }{t_{1}-t}}%
\exp \left( ig\left( x-\frac{2g}{3}\frac{t_{1}t^{2}}{t_{1}-t}\right) \frac{%
t_{1}^{2}t}{\left( t_{1}-t\right) ^{2}}-\frac{ix^{2}}{4\left( t_{1}-t\right) 
}\right)  \label{AiryAccelerating} \\
&&\qquad \qquad \times g^{1/3}\text{Ai}\left( g^{1/3}\left( x-g\frac{%
t_{1}t^{2}}{t_{1}-t}\right) \frac{t_{1}}{t_{1}-t}\right) ,  \notag
\end{eqnarray}%
which is convenient for $t<t_{1}.$ The degenerate case, when $t_{1}=0,$ can
be analyzed with the help of transformation (2.9) of \cite{Lop:Sus:VegaGroup}%
:%
\begin{equation}
\psi \left( x,t\right) =\frac{1}{\sqrt{2t}}\exp \left( \frac{i}{4t}\left(
x^{2}+\left( x+\frac{g}{12t}\right) \frac{g}{2t}\right) \right) \ g^{1/3}%
\text{Ai}\left( -\frac{g^{1/3}}{2t}\left( x+\frac{g}{8t}\right) \right) .
\label{AiryAcceleratingZero}
\end{equation}%
From now on, we choose $t_{1}>0$ for the sake of simplicity. (The most
general solution of this kind can be obtained by transformation (2.5) of 
\cite{Lop:Sus:VegaGroup}.)

An arbitrary point Ai$\left( x_{0}=\text{constant}\right) $ on the Airy
function graph accelerates in empty space according to the law:%
\begin{eqnarray}
&&x\left( t\right) =\frac{gt_{1}^{3}}{t_{1}-t}-2gt_{1}^{2}+\left( gt_{1}+%
\frac{x_{0}}{g^{1/3}t_{1}}\right) \left( t_{1}-t\right) ,  \label{AiryGunSol}
\\
&&\ \ \frac{dx}{dt}=\frac{gt_{1}^{3}}{\left( t_{1}-t\right) ^{2}}-\left(
gt_{1}+\frac{x_{0}}{g^{1/3}t_{1}}\right) ,  \notag \\
&&\ \frac{d^{2}x}{dt^{2}}=\frac{2gt_{1}^{3}}{\left( t_{1}-t\right) ^{3}} 
\notag
\end{eqnarray}%
in the laboratory frame of reference. This \textquotedblleft
kinematics\textquotedblright\ can be obtained as a unique solution of the
following singular boundary value problem on $\left( -\infty ,t_{1}\right) $:%
\begin{eqnarray}
&&\qquad \left( t_{1}-t\right) ^{2}\frac{d^{2}x}{dt^{2}}-\left(
t_{1}-t\right) \frac{dx}{dt}-x=2gt_{1}^{2}  \label{NewtonLaw} \\
&&\left( \text{an analog of Newton's second law of motion}\right) ,  \notag
\end{eqnarray}%
\begin{eqnarray}
&&\qquad \qquad \qquad \lim_{t\rightarrow -\infty }\frac{x\left( t\right) }{%
t_{1}-t}=C_{1}=gt_{1}+\frac{x_{0}}{g^{1/3}t_{1}}  \label{NewtonVelocity} \\
&&\left( \frac{dx}{dt}\sim -C_{1},\text{ an asymptotic value of constant
velocity at }-\infty \right) ,  \notag
\end{eqnarray}%
\begin{eqnarray}
&&\qquad \qquad \lim_{t\rightarrow t_{1}^{-}}\left( t_{1}-t\right) x\left(
t\right) =C_{2}=gt_{1}^{3}\neq 0  \label{NewtonPole} \\
&&\left( x\sim C_{2}/\left( t_{1}-t\right) ,\text{ a given residue of the
simple pole at }t_{1}\right) .  \notag
\end{eqnarray}%
The corresponding family of orbits in phase space is given by the equation:%
\begin{equation}
\frac{Q^{2}}{C_{2}}=2C_{1}+P+\frac{C_{1}^{2}}{P},  \label{AiryOrbits}
\end{equation}%
where $P=dx/dt+C_{1}$ and $Q=x+2gt_{1}^{2}.$ By the quadratic formula,%
\begin{equation}
P=\frac{Q^{2}}{2C_{2}}-C_{1}\pm \sqrt{\frac{Q^{2}}{2C_{2}}\left( \frac{Q^{2}%
}{2C_{2}}-2C_{1}\right) }.  \label{AiryQuadratic}
\end{equation}%
This curve degenerates to a parabola when $C_{1}=0.$

In view of (\ref{AiryGunSol}), any two points on the graph, say Ai$\left(
x_{0}\right) $ and Ai$\left( y_{0}\right) $ with $x_{0}<y_{0}$ taken at the
same initial moment of time $t_{0},$ are contracting to each other:%
\begin{equation}
x\left( t\right) -y\left( t\right) =\left( x_{0}-y_{0}\right) \frac{t_{1}-t}{%
g^{1/3}t_{1}}\rightarrow 0\qquad \text{as\ }\ t\rightarrow t_{1}
\end{equation}%
and their trajectories have two different slopes as $t\rightarrow \pm \infty
,$ when dispersion that leads to decay occurs. In other words, these analogs
of classical trajectories\ have slanted asymptotes and linearly diverge from
each other as $t\rightarrow \pm \infty :$%
\begin{equation}
\lim_{t\rightarrow \pm \infty }\left( x\left( t\right) -\left( gt_{0}+\frac{%
x_{0}}{t_{1}}\right) \left( t_{1}-t\right) +2gt_{1}^{2}\right) =0,\qquad
\lim_{t\rightarrow t_{1}}\left( \left( t_{1}-t\right) x\left( t\right)
-gt_{1}^{3}\right) =0
\end{equation}%
but both of them approach to a common vertical asymptote at $t\rightarrow
t_{1}.$ With the aid of Airy function asymptotic \cite{Olver}, for any fixed
spatial point $x$ one gets%
\begin{equation}
\left\vert \psi \left( x,t\right) \right\vert \sim g^{1/3}\sqrt{\frac{2t_{1}%
}{\pi }}\left[ \left( t_{1}-t\right) ^{2}z\right] ^{-1/4}\left\vert \sin
\left( \frac{2}{3}z^{3/2}\right) \right\vert \qquad \text{as\ }t\rightarrow
t_{1}^{-}>0,  \label{AiryAsympt}
\end{equation}%
where%
\begin{equation}
z=g^{1/3}\left( \frac{gt_{1}t^{2}}{t_{1}-t}-x\right) \frac{t_{1}}{t_{1}-t}%
>0,\qquad \lim_{t\rightarrow t_{1}}\left( t_{1}-t\right)
^{2}z=g^{4/3}t_{1}^{4}.  \label{AiryAsymptArg}
\end{equation}%
Thus the leading term remains bounded and does not depend on the spatial
coordinate $x$ when $t\rightarrow t_{1}^{-}.$ As a result, a possible
\textquotedblleft blow up\textquotedblright\ of this solution cannot occur
at any finite point in space as expected in the linear case.

On the contrary, the cubic nonlinear Schr\"{o}dinger equation is no longer
preserved under the expansion transformation (2.8) of \cite%
{Lop:Sus:VegaGroup}. But the same symmetry, say by our Theorem~1, holds for
the quintic nonlinear Schr\"{o}dinger equation, which is thus invariant
under the action of this group of transformations. This is where the blow up
solutions do exist and some of them are analyzed in section 7.5 (see, for
example, equation (\ref{BlowUpPulces})).

Our solution (\ref{AiryAccelerating}) can, in principle, be generated from
the Airy packet of Berry and Balazs \cite{BerryBalazs79} in the following
fashion. Let us consider a generalized harmonic oscillator with the variable
Hamiltonian (\ref{SchroedingerQuadratic}), which evolves from the initial
free particle coefficients and terminates its evolution to the original case
of a free particle. Solution of the corresponding Riccati-type system is
provided by Lemma~1. By the end of the cycle, the original Airy packet may
possess the initial data of (\ref{AiryAccelerating}) and this solution will
be engaged by the continuity of the wave function. In addition, we provide
an example of instability for the linear Airy beam, namely, the
corresponding variation of the initial configuration:%
\begin{eqnarray}
&&\psi \left( x,t_{0}\right) =\sqrt{\frac{\left\vert t_{1}\right\vert }{%
t_{1}-t_{0}}}g^{1/3}\text{Ai}\left( g^{1/3}\left( x-\frac{gt_{1}t_{0}^{2}}{%
t_{1}-t_{0}}\right) \frac{t_{1}}{t_{1}-t_{0}}\right) \\
&&\qquad \qquad \times \exp \left( ig\left( x-\frac{2g}{3}\frac{%
t_{1}t_{0}^{2}}{t_{1}-t_{0}}\right) \frac{t_{1}^{2}t_{0}}{\left(
t_{1}-t_{0}\right) ^{2}}-\frac{ix^{2}}{4\left( t_{1}-t_{0}\right) }\right) ,
\notag
\end{eqnarray}%
results in a \textquotedblleft collapse\textquotedblright\ of the packet in
a finite time $t_{0}\leq t<t_{1}.$ In this process, any point from the
original beam attains an infinite velocity and acceleration as $t\rightarrow
t_{1}^{-}$ in the laboratory frame of reference due to deterministic
evolution of our solution (one may call this exotic dynamic an
\textquotedblleft Airy gun\textquotedblright ; a \textsl{Mathematica}
animation is available on the article website).

The study of propagating nonlinear Airy--Painlev\'{e} optical pulses in
dispersive fibers was initiated in \cite{GiaJos89} (see also Ref.~\cite%
{Smith76} for an earlier application of the second Painlev\'{e} transcendent
in hydrodynamics) and has been continued in recent publications \cite%
{BekSeg11}, \cite{KamSegChris11}, \cite{Lottietal11}, and \cite{RudMar11}.
Our solution (\ref{PsiXT}) incorporates the simplest nonhomogeneous effects
of a nonlinear time-varying media in a unified form. The corresponding
asymptotics, connection problems, and bibliography are given in section~4.3
for the reader's convenience. This summary may facilitate further use of
these results. An important example of motion of the nonlinear Airy packet
in a time-varying spatially uniform force is left to the reader (see \cite%
{BerryBalazs79} for discussion of the classical case).

It is worth addressing a few new features of the nonlinear Airy beams under
consideration. According to (\ref{AsPII}), in the defocusing case, which is
related to positive group velocity dispersion (dark pulse), a bounded on the
entire real line solution corresponds to the real parameter $\left\vert
k_{0}\right\vert <1,$ $k_{0}\neq 0$ (a finite range of the ratio of the
nonlinearity to dispersion \cite{GiaJos89}). In the focusing case (anomalous
dispersion, bright pulse), the bounded solution can only exist under
condition (\ref{PIIModQuant}), which may be thought of as a
\textquotedblleft quantization rule\textquotedblright\ in this nonlinear
problem. The connection relation take the form%
\begin{equation}
\theta _{n}=\frac{1}{4}\pi -\frac{3\ln 2}{2\pi }\ln \left(
1+k_{0}^{2}\right) +\arg \Gamma \left( \frac{i}{2\pi }\ln \left(
1+k_{0}^{2}\right) \right) +\pi n,\quad k_{0}\neq 0
\end{equation}%
and the asymptotics of bounded solutions of the modified second Painlev\'{e}
equation are given by%
\begin{equation}
F\left( z\right) =\left\{ 
\begin{array}{c}
\sim k_{0}\text{Ai\/}\left( z\right) ,\qquad z\rightarrow +\infty \bigskip ,
\\ 
\sqrt{\dfrac{\ln \left( 1+k_{0}^{2}\right) }{\pi }}\left\vert z\right\vert
^{-1/4}\sin \left( s\left( z\right) -\theta _{n}\right) +\text{O}\left(
\left\vert z\right\vert ^{-5/4}\ln \left\vert z\right\vert \right) ,\quad
z\rightarrow -\infty \bigskip%
\end{array}%
\right.
\end{equation}%
with%
\begin{equation}
s\left( z\right) =s\left( z\right) =\frac{2}{3}\left\vert z\right\vert
^{3/2}-\frac{3}{4\pi }\ln \left( 1+k_{0}^{2}\right) \ln \left\vert
z\right\vert
\end{equation}%
for $n=0,$ $\pm 1,$ $\pm 2,$ $\ldots \ .$ For the bright pulse, there is no
restriction on the nonlinear wave amplitude and this case deserves an
experimental study.

In our opinion, a joint consideration of (breaking/restoring) symmetry of
the linear and (cubic/quintic) nonlinear Schr\"{o}dinger equations provides
a natural hierarchy of accelerating, nonspreading, and blowing up wave
packets.

\subsection{Nonlinear Airy Waves in Plasma and Ocean}

In plasma physics, evolution of weakly nonlinear, strongly dispersive, short
wavelength electron Langmuir waves described by their modulating envelope is
governed by the Schr\"{o}dinger equation with cubic nonlinearity. The
corresponding solitons are called Langmuir solitons \cite{MoralesLee78}, 
\cite{Novikovetal} in the case of a homogeneous media. In laser plasma
experiments, the plasma is both inhomogeneous and nonlinear to the
electromagnetic waves and the large-amplitude plasma waves (see, for
example, \cite{ChenHH:LiuCS76} and the references therein for more details).
We demonstrate our approach in the simplest inhomogeneous situation.

Solitons moving with acceleration in linearly inhomogeneous plasma were
studied in Refs.~\cite{ChenHH:LiuCS76} and \cite{ChenHH:LiuCS78} (see also 
\cite{Balakrish85} and \cite{TappZab71}). Single solitons and multisolitons
are found accelerated whereas maintaining their shapes when moving around
even upon emerging from collisions with other solitons.

Here we analyze another interesting \textquotedblleft one
soliton\textquotedblright\ scenario that is somewhat complementary to
recently discovered Airy beams in optics. In the defocusing case,%
\begin{equation}
i\frac{\partial \psi }{\partial t}+\frac{\partial ^{2}\psi }{\partial x^{2}}%
-2kx\psi =2\left\vert \psi \right\vert ^{2}\psi ,  \label{LinForce}
\end{equation}%
the following simple substitution%
\begin{equation}
\psi \left( x,t\right) =e^{-2i\left( ktx+2k^{2}t^{3}/3\right) }\chi \left(
\xi ,\tau \right) ,\qquad \xi =x+2kt^{2},\quad \tau =t,
\end{equation}%
which is originally due to Tappert \cite{ChenHH:LiuCS76} (see also \cite%
{Suslov11} for a more general case), results in%
\begin{equation}
i\frac{\partial \chi }{\partial \tau }+\frac{\partial ^{2}\chi }{\partial
\xi ^{2}}=2\left\vert \chi \right\vert ^{2}\chi .
\end{equation}%
When\ $g=2k,$ the composition of this transformation with solution (\ref%
{NonlinearAiry}) predicts that, despite of the presence of the external
constant force, the following wave packet:%
\begin{equation}
\psi \left( x,t\right) =e^{i\left( x-vt/2\right)
v/2-igvt^{2}/2}g^{1/3}A_{k_{0}}\left( g^{1/3}\left( x-vt\right) \right)
\label{ConstVelosity}
\end{equation}%
moves with the constant speed $v$ (the standing wave solution, originally
found in Ref.~\cite{Smith76}, occurs if $v=0).$ Here, the soliton profiles
are defined as the Airy-type solutions of the second Painlev\'{e} equations
and the corresponding asymptotic properties of these nonlinear Airy
functions are discussed in section~4 for the both, defocusing and focusing,
cases (see also \cite{Clark10} for more details). It is worth noting once
again that in the focusing case the \textquotedblleft quantization
rule\textquotedblright\ (\ref{PIIModQuant}) should hold for the bounded
solution.

For an application of the second Painlev\'{e} transcendents, in the case of
the nonlinear Schr\"{o}dinger equation with the linear potential (\ref%
{LinForce}), to giant waves, as observed on the Agulhas Current of the
southwest Indian Ocean, see Ref.~\cite{Smith76}. The operation of an
electrostatic spherical probe in a slightly ionized, collision-dominated
plasma can also be described by the second Painlev\'{e} transcendents \cite%
{Cohen63}, \cite{Kash98}, and \cite{Kash04}. A similar behavior occurs in
numerical simulations of the shock waves in plasma and corresponding
experiments (see \cite{KadomtsevCollectPP}, \cite{KadKarp71}, \cite%
{KarpmanNW} and the references therein).

\subsection{Nonlinear Alfv\'{e}n waves}

The propagation of a nonlinear Alfv\'{e}n wave with a small but
not-vanishing wave number along the magnetic field in cold plasmas is
governed by the derivative nonlinear Schr\"{o}dinger equation \cite%
{Mioetal76}, \cite{Mioetal76a}, \cite{Wadatietal78}, \cite{Wadatietal79}.
The Alfv\'{e}n solitons are discussed in \cite{MjWyller86}, \cite%
{Pokhotelovetal96}, \cite{Xuetal}. The derivative nonlinear Schr\"{o}dinger
equation is also used for modeling of wave processes in nonlinear optics,
Stokes waves in fluids of finite depth, etc. Our approach give an
opportunity to incorporate the inhomogeneous effects of the media.

Freak waves in plasmas and optical rogue waves are discussed in recent
publications \cite{Arecchi11}, \cite{Lavederetal11}, \cite{Ruderman10}, \cite%
{ShuklaMoslem12}, and \cite{Xuetal12}.

\subsection{Ion Acoustic Envelope Solitons in Explosive Ionospheric
Experiments}

The theory of nonlinear wave modulation in cohesionless plasma on the basis
of the Vlasov description \cite{Vlasov38} with the nonlinear Landau dumping
is developed in \cite{IchiImamTani72} and \cite{IchiTani73} (see also \cite%
{WellandIchikawaWilh78}). The corresponding nonlinear Schr\"{o}dinger
equation that describes the amplitude of the plasma density fluctuations has
a nonlocal-nonlinear term \cite{IchiTani73}.

The stable ion acoustic envelop soliton propagating perpendicular to the
magnetic field lines can exist in the ionospheric plasma. These solitons
were identified by processing the data from the North Star active explosive
ionospheric experiment \cite{Kovaleva07}, \cite{Kovaleva08}. The parameters
of the soliton have been first estimated under assumption that the
coefficients in the Schr\"{o}dinger equation are constant \cite{Kovaleva07}.
The additional ion dissipating terms have been used in order to obtain the
stable soliton solutions in a plasma with steep density and temperature
gradients \cite{Kovaleva08} (see also the references therein).

Our approach give an alternative opportunity to incorporate nonautonomous
dissipative terms. Exact solutions of the cubic complex Ginzburg--Landau
equation are to be taken as a nonpeturbed approximation. These solutions
include coherent structures with a strong spatial localization such as
pulses and fronts, as well as, sources and sinks \cite{AransonKramer02}, 
\cite{ConteMusette93}, \cite{BekkiNozaki94}, \cite{BekkiNozaki95}, \cite%
{NozakiBekki83}, \cite{NozakiBekki84}, \cite{PerStein77}, and \cite{SaaHoh92}%
. More details will be discussed elsewhere.

\subsection{Oscillating Coherent Structures}

Let us consider the autonomous generalized nonlinear Schr\"{o}dinger
equation (\ref{DNLSEStandardForm}) and address an intriguing question ---
can self-oscillating solutions exist? By our Theorem~1, the following
special case $c_{0}=1,$ $d_{0}=d_{1}=d_{2}=0$ and $d_{3}=d_{4}$ of this
equation, namely,%
\begin{equation}
i\psi _{t}+\psi _{xx}-x^{2}\psi =id_{3}\left( \left\vert \psi \right\vert
^{2}\psi _{x}+\psi ^{2}\psi _{x}^{\ast }\right) +d_{5}\left\vert \psi
\right\vert ^{4}\psi ,  \label{HarmonicSol}
\end{equation}%
(we replace $\tau =-\gamma $ and use the original notation for convenience)
is invariant under the action of the Schr\"{o}dinger group, which was
originally introduced by Niederer \cite{Niederer73} for the linear harmonic
oscillator, when $d_{3}=d_{5}=0$ and space-oscillating solutions exist \cite%
{LopSusVegaHarm}, \cite{Marhic78}. Therefore the nonlinear equation may
possess some interesting oscillating solutions too! The direct action of the
Schr\"{o}dinger group is given by%
\begin{equation}
\psi \left( x,t\right) =\sqrt{\frac{\beta \left( 0\right) }{\left\vert
z\left( t\right) \right\vert }}\ e^{i\left( \alpha \left( t\right)
x^{2}+\delta \left( t\right) x+\kappa \left( t\right) \right) }\ \chi \left(
\beta \left( t\right) x+\varepsilon \left( t\right) ,-\gamma \left( t\right)
\right) ,  \label{SchroedingerGroup}
\end{equation}%
where, according to Lemma~5, we define $z\left( t\right)
=c_{1}e^{2it}+c_{2}e^{-2it}$ and find everything in terms of this
complex-valued function as follows%
\begin{eqnarray}
&&\alpha \left( t\right) =i\frac{\left( c_{1}c_{2}\right) ^{\ast
}z^{2}-c_{1}c_{2}\left( z^{\ast }\right) ^{2}}{2\left( c_{1}-c_{2}^{\ast
}\right) \left\vert z\right\vert ^{2}},\qquad \beta \left( t\right) =\pm 
\frac{\sqrt{\left\vert c_{1}\right\vert ^{2}-\left\vert c_{2}\right\vert ^{2}%
}}{2\left\vert z\right\vert ^{2}},\qquad \gamma \left( t\right) =\frac{1}{2}%
\arg z,  \label{SchroedingerGroupComplex} \\
&&\delta \left( t\right) =\frac{c_{3}z-c_{3}^{\ast }z^{\ast }}{2i\left\vert
z\right\vert ^{2}},\quad \varepsilon \left( t\right) =\pm \frac{%
c_{3}z+c_{3}^{\ast }z^{\ast }}{2\left\vert z\right\vert \sqrt{\left\vert
c_{1}\right\vert ^{2}-\left\vert c_{2}\right\vert ^{2}}},\quad \kappa \left(
t\right) =\frac{\left( c_{3}^{2}z+\left. c_{3}^{\ast }\right. ^{2}z^{\ast
}\right) \left( z-z^{\ast }\right) }{8i\left( c_{1}-c_{2}^{\ast }\right)
\left\vert z\right\vert ^{2}}.  \notag
\end{eqnarray}%
The complex parameters:%
\begin{equation}
c_{1}=\frac{1+\beta ^{2}\left( 0\right) }{2}-i\alpha \left( 0\right) ,\qquad
c_{2}=\frac{1-\beta ^{2}\left( 0\right) }{2}+i\alpha \left( 0\right) ,\qquad
c_{3}=\varepsilon \left( 0\right) \beta \left( 0\right) +i\delta \left(
0\right)  \label{ComplexParameters}
\end{equation}%
are defined in terms of `essential' real initial data (here we choose $%
\gamma \left( 0\right) =\kappa \left( 0\right) =0$ for the sake of
simplicity). The real form of transformation (\ref{SchroedingerGroup}) and
visualization of the corresponding oscillating \textquotedblleft
missing\textquotedblright\ solutions for the linear harmonic oscillator can
be found in Ref.~\cite{LopSusVegaHarm}. In view of%
\begin{equation}
\left\vert z\right\vert ^{2}=\left\vert c_{1}\right\vert ^{2}+\left\vert
c_{2}\right\vert ^{2}+c_{1}c_{2}^{\ast }e^{4it}+c_{1}^{\ast
}c_{2}e^{-4it},\quad c_{1}c_{2}^{\ast }=\frac{1-\beta ^{4}\left( 0\right) }{4%
}-\alpha ^{2}\left( 0\right) -i\alpha \left( 0\right) ,  \label{ZMod2}
\end{equation}%
the shape-preserving configurations, that are somewhat similar to the
coherent states of the linear harmonic oscillator, may occur through this
transformation only when $\alpha \left( 0\right) =0$ and $\beta \left(
0\right) =\pm 1.$

Our formulas (\ref{SchroedingerGroup})--(\ref{ComplexParameters}) provide a
new time-periodic solution of equation (\ref{HarmonicSol}) from any given
solution. Although explicit solutions of (\ref{HarmonicSol}) are not readily
available in the literature (see, for example, \cite{Conte07} and the
references therein for $d_{3}=0),$ Bose condensation and/or nonlinear
effects in \textquotedblleft non-Kerr materials\textquotedblright , e.g.
fiber optics beyond the cubic nonlinearity, provide important examples.

\subsubsection{Example~1}

Mean-field theory has been remarkable successful at description both static
and dynamic properties of Bose-Einstein condensates \cite%
{Dal:Giorg:Pitaevski:Str99}, \cite{PitString03Book}. The macroscopic wave
function obeys a $3D$ cubic nonlinear Schr\"{o}dinger equation, which is
usually called the Gross--Pitaevskii equation in this model. There are
several reasons to consider higher order nonlinearity in the
Gross--Pitaevskii equation \cite{PitString03Book}. The quintic case is of
particular importance because near Feshbach resonance one may turn the
scattering length to zero when the dominant interaction among atoms is due
to three-body effects (see \cite{BrazhKonotopPita06}, \cite{Kohler02}, \cite%
{MuryShlyapetal02}, \cite{Pit06}, \cite{Pollacketal10}, \cite%
{PollacketalHulet09} and the references therein; on $^{7}Li,$ for example,
the scattering length is reported as small as $0.01$ Bohr radii \cite%
{PollacketalHulet09}). Then the nonlinear term in the mean-field equation
has the quintic form. Another interesting example is a $1D$ Bose gas in the
limit of impenetrable particles \cite{Girardeau60}, \cite{Kolomeiskyetal00}, 
\cite{Tonks36}.

The $1D$-quintic nonlinear Schr\"{o}dinger equation without potential in
dimensionless units, 
\begin{equation}
iA_{t}+A_{xx}\pm \frac{3}{4}\left\vert A\right\vert ^{4}A=0,  \label{QNLSE}
\end{equation}%
has the following explicit solutions adapted from \cite{MarcHugConte94} (we
use the notation and terminology from \cite{MarcHugConte94} and \cite%
{SaaHoh92}; see also \cite{Gagnon89} and \cite{GagWint89}).

Pulses:%
\begin{eqnarray}
A\left( x,t\right) &=&e^{i\phi }\left[ \frac{k}{\cosh k\left( x-vt\right) }%
\right] ^{1/2}  \label{Pulse} \\
&&\times \exp i\left( \frac{vx}{2}+\left( k^{2}-v^{2}\right) \frac{t}{4}%
\right)  \notag
\end{eqnarray}%
($\phi ,$ $v$ and $k$ are real parameters, the upper sign of the nonlinear
term should be taken in (\ref{QNLSE}); see also \cite{Peletal96} and the
references therein). We have%
\begin{equation}
\int_{-\infty }^{\infty }\left\vert A\left( x,t\right) \right\vert ^{2}\
dx=\pi  \label{L2NormPulse}
\end{equation}%
and the corresponding plane wave expansion,%
\begin{equation}
A\left( x,t\right) =\frac{1}{\sqrt{2\pi }}\int_{-\infty }^{\infty
}e^{ipx}B\left( p,t\right) \ dp,  \label{PlanePulse}
\end{equation}%
can be found in terms of gamma functions:%
\begin{eqnarray}
&&B\left( p,t\right) =\frac{e^{i\phi }}{2\pi \sqrt{k}}\exp i\left( \frac{%
v^{2}+k^{2}}{4}-pv\right) t  \label{HyperPulse} \\
&&\qquad \qquad \times \ \Gamma \left( \frac{1}{4}+\frac{i}{2k}\left( p-%
\frac{v}{2}\right) \right) \Gamma \left( \frac{1}{4}-\frac{i}{2k}\left( p-%
\frac{v}{2}\right) \right)  \notag
\end{eqnarray}%
with the help of integral (\ref{BIntegral}) from Appendix (the case $\omega
=0$ allows to evaluate the $L^{1}$-norm of solution (\ref{Pulse})). The
energy functional is given by%
\begin{equation}
E=-\overline{\psi _{x}^{2}}-\frac{1}{4}\overline{\left\vert \psi \right\vert
^{4}}=\frac{v^{2}}{4}\geq 0
\end{equation}%
and its positivity provides a sufficient condition for developing of a blow
up, namely a singularity such that the wave amplitude tends to infinity in
finite time \cite{ChangLush11}, \cite{SulemSulem99}, and \cite{ZakhSyn76}.

Sources and sinks:%
\begin{eqnarray}
A\left( x,t\right) &=&e^{i\phi }r^{1/2}\left[ \frac{\cosh \left( \sqrt{3}%
r\left( x-vt\right) \right) \pm 1}{\cosh \left( \sqrt{3}r\left( x-vt\right)
\right) \mp 2}\right] ^{1/2}  \label{Source} \\
&&\times \exp i\left( \frac{vx}{2}-\left( v^{2}+3r^{2}\right) \frac{t}{4}%
\right)  \notag
\end{eqnarray}%
($\phi ,$ $v$ and $r$ are real parameters; see also \cite{Kolomeiskyetal00}%
). Equation (\ref{QNLSE}) has also a class of (double) periodic solutions in
terms of elliptic functions \cite{Gagnon89}, \cite{GagWint89}. They will be
discussed elsewhere.

In addition, direct action of the Schr\"{o}dinger group \cite%
{Lop:Sus:VegaGroup}, \cite{Niederer72} on (\ref{Pulse}) produces a
six-parameter family of square integrable solutions:%
\begin{eqnarray}
\psi \left( x,t\right) &=&\sqrt{\frac{\beta \left( 0\right) }{1+4\alpha
\left( 0\right) t}}\ \exp i\left( \frac{\alpha \left( 0\right) x^{2}+\delta
\left( 0\right) x-\delta ^{2}\left( 0\right) t}{1+4\alpha \left( 0\right) t}%
+\kappa \left( 0\right) \right)  \notag \\
&&\times A\left( \beta \left( 0\right) \frac{x-2\delta \left( 0\right) t}{%
1+4\alpha \left( 0\right) t}+\varepsilon \left( 0\right) ,\ \frac{\beta
^{2}\left( 0\right) t}{1+4\alpha \left( 0\right) t}-\gamma \left( 0\right)
\right)  \label{BlowUpPulces}
\end{eqnarray}%
(one can choose $v=0$ withot loss of generality; see also \cite%
{Ald:Coss:Guerr:Lop-Ru11}, \cite{Ghosh02}, \cite{Guerr:Lop:Ald:Coss11} and
the references therein regarding the Schr\"{o}dinger group). All of them
blow up at the point $x_{0}=-\delta \left( 0\right) /2\alpha \left( 0\right) 
$ in a finite time, when $t\rightarrow t_{0}=-1/4\alpha \left( 0\right) $
and $\alpha \left( 0\right) \neq 0.$ (We use real-valued initial data for
the corresponding Riccati-type system; see \cite{Lop:Sus:VegaGroup} for more
details.) Blow up of these solutions can be naturally studied in phase
space. The corresponding Wigner function \cite{HilletyetalWigner84}, \cite%
{Wigner32} is evaluatited with the help of integral (\ref{IntegralFourier}).

A finite time blow up of solutions of the unidimensional quintic nonlinear
Schr\"{o}dinger equation (\ref{QNLSE}) is a classical result discussed in
many publications; see, for example, \cite{BudZakhSyn75}, \cite{ChangLush11}%
, \cite{Peletal96}, \cite{Rubinetal00}, \cite{ZakhSobSyn71}, \cite{ZakhSyn76}
and the references therein. This case is critical because any decrease of
the power of nonlinearity results in the global existence of solutions \cite%
{GinVelo79}, \cite{SulemSulem99}, \cite{Weinstein83}, \cite{ZakhSyn76} (see
also Refs. \cite{KolomStraley92} and \cite{Kolomeiskyetal00} for a
renormalization approach). Here, for the family of solutions (\ref%
{BlowUpPulces}), we have shown that the blow up occurs due to a
\textquotedblleft hidden symmetry\textquotedblright\ of this nonlinear PDE.
This property holds for all solutions which exponentially decay at infinity.

\subsubsection{Example~2}

The quintic nonlinear Schr\"{o}dinger equation in a parabolic confinement,%
\begin{equation}
i\psi _{t}+\psi _{xx}-x^{2}\psi \pm \frac{3}{4}\left\vert \psi \right\vert
^{4}\psi =0,  \label{QNLSEParabolic}
\end{equation}%
describes a mean-field model of strongly interacting $1D$ Bose gases\ for
the practically important case of a harmonic trap \cite{BrazhKonotopPita06}, 
\cite{DunLorOlsh01}, \cite{GirardeauWrite00}, \cite{Kolomeiskyetal00}, \cite%
{KolomStraley92}, \cite{Pit06}, \cite{Rubinetal00} and, in particular the
so-called \textit{Tonks--Girardeau} gas of impenetrable bosons \cite%
{Girardeau60}, \cite{Tonks36}; see \cite{KinoWegWeiss04} and \cite%
{ParedesetalShlyap04} for experimental observations. (The time-independent
version of the quintic nonlinear Schr\"{o}dinger equation has been
rigorously derived from the many-body problem \cite{LiebSeiringYng03}; see
also \cite{LiebSeiringYng00} for a rigorous derivation of the
Gross-Pitaevskii energy functional.)

Our observation is as follows. By the gauge transformation (e.~g. \cite%
{Ald:Coss:Guerr:Lop-Ru11}, \cite{Guerr:Lop:Ald:Coss11}, \cite%
{Lop:Sus:VegaGroup} and the references therein for the linear problem, the
quintic nonlinearity is invariant under this transformation by our
Theorem~1), equation (\ref{QNLSEParabolic}) has the following solution:%
\begin{equation}
\psi \left( x,t\right) =\frac{e^{-\left( i/2\right) x^{2}\tan 2t}}{\sqrt{%
\cos 2t}}A\left( \frac{x}{\cos 2t},\frac{\tan 2t}{2}\right) ,
\label{GaugeFreeOscillator}
\end{equation}%
where $A\left( x,t\right) $ is any solution of (\ref{QNLSE}), in particular,
the pulses and sources (\ref{Pulse}) and (\ref{Source}).

Oscillating pulses:%
\begin{eqnarray}
\psi \left( x,t\right) &=&e^{i\phi }\sqrt{\frac{2k}{\cos 2t}}\ \text{sech}%
^{1/2}\left( \frac{2k}{\cos 2t}\left( x-v\sin 2t\right) \right)
\label{OscPulse} \\
&&\times \exp i\frac{2vx+\left( k^{2}-v^{2}-x^{2}\right) \sin 2t}{2\cos 2t} 
\notag
\end{eqnarray}%
($\phi ,$ $v$ and $k$ are real parameters, the upper sign should be taken in
the nonlinear term). They are square integrable at all times:%
\begin{equation}
\frac{1}{\pi }\int_{-\infty }^{\infty }\left\vert \psi \left( x,t\right)
\right\vert ^{2}\ dx=1  \label{PI}
\end{equation}%
and%
\begin{eqnarray}
\frac{1}{\pi }\int_{-\infty }^{\infty }\left\vert x\psi \left( x,t\right)
\right\vert ^{2}\ dx &=&\frac{\pi ^{2}}{\left( 4k\right) ^{2}}\cos
^{2}2t+v^{2}\sin ^{2}2t,  \label{XII} \\
\frac{1}{\pi }\int_{-\infty }^{\infty }\left\vert \psi _{x}\left( x,t\right)
\right\vert ^{2}\ dx &=&\frac{\pi ^{2}}{\left( 4k\right) ^{2}}\sin
^{2}2t+v^{2}\cos ^{2}2t+\frac{k^{2}}{2\cos ^{2}2t}.  \label{PSII}
\end{eqnarray}%
The corresponding elementary integrals are collected in Appendix for
convenience; see (\ref{SechElemInts}).

The expectation values and variances of the position $x$ and momentum $%
p=i^{-1}\partial /\partial x$ operators are given by%
\begin{equation}
\overline{x}=\frac{\left\langle x\right\rangle }{\left\langle 1\right\rangle 
}=v\sin 2t,\qquad \overline{p}=\frac{\left\langle p\right\rangle }{%
\left\langle 1\right\rangle }=v\cos 2t  \label{ExpectXP}
\end{equation}%
and%
\begin{equation}
\left( \delta x\right) ^{2}=\overline{x^{2}}-\left. \overline{x}\right. ^{2}=%
\frac{\pi ^{2}}{\left( 4k\right) ^{2}}\cos ^{2}2t,\qquad \left( \delta
p\right) ^{2}=\overline{p^{2}}-\left. \overline{p}\right. ^{2}=\frac{\pi ^{2}%
}{\left( 4k\right) ^{2}}\sin ^{2}2t+\frac{k^{2}}{2\cos ^{2}2t},
\label{VarXP}
\end{equation}%
respectively. [It is worth noting that%
\begin{equation}
\overline{x}=vt,\qquad \overline{p}=\frac{v}{2},\qquad \left( \delta
x\right) ^{2}=\frac{\pi ^{2}}{4k^{2}},\qquad \left( \delta p\right) ^{2}=%
\frac{k^{2}}{8}  \label{XPpulse}
\end{equation}%
with%
\begin{equation}
\left( \delta p\right) ^{2}\left( \delta x\right) ^{2}=\frac{\pi ^{2}}{32}>%
\frac{1}{4}  \label{HeisenbergPulse}
\end{equation}%
for the original traveling wave solution (\ref{Pulse})]. The energy
functional is given by%
\begin{equation}
E=\overline{H}=\overline{p^{2}}+\overline{x^{2}}-\frac{1}{4}\overline{%
\left\vert \psi \right\vert ^{4}}=\frac{\pi ^{2}}{\left( 4k\right) ^{2}}%
+v^{2}>0  \label{EnergyPulseOsc}
\end{equation}%
by the direct evaluation.

A remarkable feature of the oscillating solution (\ref{OscPulse}) is that
the corresponding probability density converges, say as a sequence,
periodically in time, to the Dirac delta function at the turning points: $%
\left\vert \psi \left( x,t\right) \right\vert ^{2}\rightarrow \pi \delta
\left( x\mp v\right) $ as $t\rightarrow \pm \pi /4$ etc., when an
\textquotedblleft absolute squeezing\textquotedblright\ and/or localization,
namely $\delta x=0,$ occurs with $\delta p=\infty .$ The fundamental
Heisenberg uncertainty principle holds%
\begin{equation}
\left( \delta p\right) ^{2}\left( \delta x\right) ^{2}=\frac{\pi ^{2}}{32}%
\left( 1+\frac{\pi ^{2}}{32k^{4}}\sin ^{2}4t\right) \geq \frac{\pi ^{2}}{32}>%
\frac{1}{4}  \label{HeisenbergUP}
\end{equation}%
at all times. (It is worth noting that $\pi ^{2}/8\approx \allowbreak
1.\allowbreak 2337.$ The minimum-uncertainty squeezed states for a linear
harmonic oscillator, when the absolute minimum of the product can be
achieved, are discussed in Ref.~\cite{KrySusVegaMinimum}.)

The corresponding Wigner function \cite{HilletyetalWigner84}, \cite{Wigner32}%
:%
\begin{equation}
W\left( x,p,t\right) =\frac{1}{2\pi }\int_{-\infty }^{\infty }\psi ^{\ast
}\left( x+y/2,t\right) \psi \left( x-y/2,t\right) e^{ipy}\ dy,
\label{WignerDefinition}
\end{equation}%
can be evaluated in terms of hypergeometric function:%
\begin{equation}
W\left( x,p,t\right) =\text{sech\ }\omega \ _{2}F_{1}\left( 
\begin{array}{c}
1/2+i\omega ,\ 1/2-i\omega \medskip \\ 
1%
\end{array}%
;-\sinh ^{2}\vartheta \right)  \label{Wigner}
\end{equation}%
with the aid of integral (\ref{IntegralFourier}). Here%
\begin{equation}
\omega =\frac{1}{2k}\left( p\cos 2t+x\sin 2t-v\right) ,\qquad \vartheta =%
\frac{2k}{\cos 2t}\left( x-v\sin 2t\right) .  \label{WignerParameters}
\end{equation}%
A visualization of Wigner's function in phase space can be found at the
article website.

Our mathematical example reveals a surprising result that a medium described
by the quintic nonlinear Schr\"{o}dinger equation (\ref{QNLSEParabolic}) may
allow, in principle, to measure the coordinate of a \textquotedblleft
particle\textquotedblright\ with any accuracy, below the so-called vacuum
noise level and without violation of the Heisenberg uncertainty relation,
which is a major obstacle, for example, in the direct detection of
gravitational waves \cite{Abadetal11}, \cite{Eberleetal10}.

Oscillating sources and sinks:%
\begin{eqnarray}
\psi \left( x,t\right) &=&e^{i\phi }\sqrt{\frac{2r}{3^{1/2}\cos 2t}}\left[ 1-%
\frac{3}{\cosh \left( \dfrac{2r}{\cos 2t}\left( x-v\sin 2t\right) \right) +2}%
\right] ^{1/2}  \label{OscSource} \\
&&\times \exp i\frac{2vx-\left( v^{2}+r^{2}+x^{2}\right) \sin 2t}{2\cos 2t} 
\notag
\end{eqnarray}%
($\phi ,$ $v$ and $r$ are real parameters, we have chosen the lower sign of
the nonlinear term in (\ref{QNLSEParabolic})). Their detailed investigation
will be given elsewhere.

Then the action of Schr\"{o}dinger group, say in our complex form (\ref%
{SchroedingerGroup})--(\ref{ComplexParameters}), on (\ref{OscPulse}) and/or (%
\ref{OscSource}) produces a six-parameter family of new oscillating
solutions of equation (\ref{QNLSEParabolic}). For example, the following
extension of (\ref{OscPulse}) holds:%
\begin{eqnarray}
\psi \left( x,t\right) &=&e^{i\phi }\sqrt{\frac{2k\beta \left( 0\right) }{%
2\alpha \left( 0\right) \sin 2t+\cos 2t}}  \label{OscPulseGeneral} \\
&&\times \ \text{sech}^{1/2}2k\left( \beta \left( 0\right) \frac{x-\left(
\delta \left( 0\right) +v\beta \left( 0\right) \right) \sin 2t}{2\alpha
\left( 0\right) \sin 2t+\cos 2t}+\varepsilon \left( 0\right) \right)  \notag
\\
&&\times \exp i\frac{\left( 2\alpha \left( 0\right) \cos 2t-\sin 2t\right)
x^{2}+\delta \left( 0\right) \left( 2x-\delta \left( 0\right) \sin 2t\right) 
}{2\left( 2\alpha \left( 0\right) \sin 2t+\cos 2t\right) }  \notag \\
&&\times \exp i\left[ \beta \left( 0\right) \frac{2v\left( x-\delta \left(
0\right) \sin 2t\right) +\left( k^{2}-v^{2}\right) \beta \left( 0\right)
\sin 2t}{2\left( 2\alpha \left( 0\right) \sin 2t+\cos 2t\right) }%
+v\varepsilon \left( 0\right) \right] ,  \notag
\end{eqnarray}%
which presents the most general solution of this kind. (We assume that $%
\gamma \left( 0\right) =\kappa \left( 0\right) =0$ for the sake of
simplicity. Although the breather/pulsing solution, when $\alpha \left(
0\right) =\delta \left( 0\right) =\varepsilon \left( 0\right) =v=0$ and $%
\beta \left( 0\right) =1,$ was found in Ref.~\cite{Rubinetal00}, our
discussion of the uncertainty relation and Wigner function seems to be
missing in the available literature.) The blows up occur periodically in
time at the points 
\begin{equation}
x_{0}=\pm \frac{\delta \left( 0\right) +v\beta \left( 0\right) }{\sqrt{%
4\alpha ^{2}\left( 0\right) +1}},\qquad \text{when}\quad \cot 2t=-2\alpha
\left( 0\right) .
\end{equation}%
The corresponding Wigner function is given by our formula (\ref{Wigner})
with the following values of parameters:%
\begin{eqnarray}
&&\omega =\frac{1}{2k\beta \left( 0\right) }\left[ \left( p-2\alpha \left(
0\right) x\right) \cos 2t+\left( 2\alpha \left( 0\right) p+x\right) \sin
2t-\delta \left( 0\right) -v\beta \left( 0\right) \right] ,
\label{WignerParametersGeneral} \\
&&\vartheta =2k\left( \beta \left( 0\right) \frac{x-\left( \delta \left(
0\right) +v\beta \left( 0\right) \right) \sin 2t}{2\alpha \left( 0\right)
\sin 2t+\cos 2t}+\varepsilon \left( 0\right) \right) .  \notag
\end{eqnarray}%
(One can put $v=0$ in (\ref{OscPulseGeneral})--(\ref{WignerParametersGeneral}%
) without loss of generality because the general action of the Schr\"{o}%
dinger group already includes the Galilei transformation \cite{Niederer72}, 
\cite{Niederer73}.)

\subsection{A Nonlinear Harmonic Oscillator and Painlev\'{e} IV}

The following special case of equation (\ref{DNLSEStandardForm}):%
\begin{equation}
i\psi _{t}+\psi _{xx}-\frac{1}{4}x^{2}\psi =2x\left\vert \psi \right\vert
^{2}\psi +3\left\vert \psi \right\vert ^{4}\psi ,  \label{NLSEHarmOscillator}
\end{equation}%
by the substitution%
\begin{equation}
\psi \left( x,t\right) =e^{-i\left( n+1/2\right) t}u\left( x\right)
\end{equation}%
can be reduced to the ordinary differential equation for a nonlinear
harmonic oscillator studied in Refs.~\cite{Bassometal92}, \cite{Bassometal93}%
, and \cite{Clark10}:%
\begin{equation}
u^{\prime \prime }=3u^{5}+2xu^{3}+\left( \frac{1}{4}x^{2}-n-\frac{1}{2}%
\right) u.
\end{equation}%
There are exact `bounded state' solutions for any integer $n=0,1,2,\dots \ $
that asymptotically approach the wave functions of the linear harmonic
oscillator as $\left\vert x\right\vert \rightarrow \infty $ (see \cite%
{Clark10} for a summary of these results). The probability density\ $%
\left\vert \psi \left( x,t\right) \right\vert ^{2}=u^{2}\left( x\right) $
can be found in terms of a special fourth Painlev\'{e} transcendent.
Equation (\ref{NLSEHarmOscillator}) arises also as a symmetry reduction of
the derivative nonlinear Schr\"{o}dinger equation, which is solvable by
inverse scattering techniques \cite{AbloRamSegII}, \cite{Bassometal92}, and 
\cite{KaupNewell78}. (A symmetry reduction of the cubic-quintic nonlinear
Schr\"{o}dinger equation to fourth Painlev\'{e} transcendents is discussed
in \cite{GagWint89}.) There are a very few nonlinear ordinary differential
equations which obey the Painlev\'{e} property and have bounded solutions
and this is one of those. It would be interesting to find a connection of
these solutions investigated in Refs.~\cite{Bassometal92}--\cite%
{Bassometal93} in details with the theory of Bose--Einstein condensation and
fiber optics, say for a stable signal propagation.

\noindent \textbf{Acknowledgments.\/} One of the authors (SKS) thanks
Michael Berry, Carlos Castillo-Ch\'{a}vez, Robert Conte, Tom Koornwinder,
Oleksandr Pavlyk, Peter Paule, Georgy V.~Slyapnikov, Erwin Suazo, and Jos%
\'{e} M. Vega-Guzm\'{a}n for valuable discussions. He is also grateful to
Research Institute for Symbolic Computation, Johannes Kepler Universit\"{a}t
Linz, Austria, where some part of this work was done, for their hospitality.

\appendix

\section{Integral Evaluations}

The following integral%
\begin{equation}
\int_{-\infty }^{\infty }\frac{e^{i\omega s}\ ds}{\sqrt{\cosh s+\cosh c}}=%
\frac{\sqrt{2}\pi }{\cosh \pi \omega }\ _{2}F_{1}\left( 
\begin{array}{c}
\dfrac{1}{2}+i\omega ,\ \dfrac{1}{2}-i\omega \medskip \\ 
1%
\end{array}%
;\ -\sinh ^{2}\frac{c}{2}\right) ,\quad \left\vert \sinh \frac{c}{2}%
\right\vert <1,  \label{IntegralFourier}
\end{equation}%
which \textsl{Mathematica} fails to evaluate in a compact form\footnote{%
Oleksandr Pavlyk gave a \textsl{Mathematica} evaluation of this integral.},
can be derived as a special case of integral representation (2) on page 82
of Ref.~\cite{Erd}. The hypergeometric function is related to the Legendre
associated functions, which are a special case of Jacobi functions, see \cite%
{Dunster}, \cite{Koornwinder75}, \cite{Koornwinder85}:%
\begin{equation}
P_{1/2-i\omega }\left( \cosh c\right) =\ _{2}F_{1}\left( 
\begin{array}{c}
\dfrac{1}{2}+i\omega ,\ \dfrac{1}{2}-i\omega \medskip \\ 
1%
\end{array}%
;\ -\sinh ^{2}\frac{c}{2}\right)  \label{Legendre}
\end{equation}%
and Mehler conical functions. An important special case is given by%
\begin{equation}
\int_{-\infty }^{\infty }\frac{e^{i\omega s}}{\sqrt{\cosh s}}\ ds=\frac{\
\Gamma \left( 1/4+i\omega /2\right) \Gamma \left( 1/4-i\omega /2\right) }{%
\sqrt{2\pi }}.  \label{BIntegral}
\end{equation}%
Useful elementary integrals,%
\begin{equation}
\int_{-\infty }^{\infty }\frac{du}{\cosh u}=\pi ,\quad \int_{-\infty
}^{\infty }\frac{u^{2}\ du}{\cosh u}=\frac{\pi ^{3}}{4},\quad \int_{-\infty
}^{\infty }\frac{\sinh ^{2}u\ du}{\cosh ^{3}u}=\frac{3\pi }{8},\quad
\int_{-\infty }^{\infty }\frac{du}{\cosh ^{3}u}=\frac{\pi }{2},
\label{SechElemInts}
\end{equation}%
can be verified by \textsl{Mathematica}.

\end{document}